\DeclareSymbolFont{matha}{OML}{txmi}{m}{it}
\DeclareMathSymbol{\varv}{\mathord}{matha}{118}
\newcommand{\trans}{^{\mathsf{T}}}
\newcommand{\herm}{^{\H}}
\begin{document}
	\title{\huge{Achievable Rate of a STAR-RIS Assisted Massive MIMO System Under Spatially-Correlated Channels}} 
	\author{Anastasios Papazafeiropoulos, Le-Nam Tran, Zaid Abdullah, Pandelis Kourtessis, Symeon Chatzinotas \thanks{A. Papazafeiropoulos is with the Communications and Intelligent Systems Research Group, University of Hertfordshire, Hatfield AL10 9AB, U. K., and with SnT at the University of Luxembourg, Luxembourg. L. N. Tran is with the School of Electrical and Electronic Engineering, University College Dublin, Ireland. P. Kourtessis is with the Communications and Intelligent Systems Research Group, University of Hertfordshire, Hatfield AL10 9AB, U. K. S. Chatzinotas is with the SnT at the University of Luxembourg, Luxembourg. A. Papazafeiropoulos was supported  by the University of Hertfordshire's 5-year Vice Chancellor's Research Fellowship.
			S. Chatzinotas   was supported by the National Research Fund, Luxembourg, under the project RISOTTI. E-mails: tapapazaf@gmail.com, nam.tran@ucd.ie, p.kourtessis@herts.ac.uk, \{zaid.abdullah,symeon.chatzinotas\}@uni.lu.}}
	\maketitle\vspace{-1.7cm}
	\begin{abstract}
		Reconfigurable intelligent surfaces (RIS)-assisted  massive multiple-input multiple-output (mMIMO)  is a promising technology for applications in next-generation networks. However, reflecting-only RIS provides limited coverage compared to a simultaneously transmitting and reflecting RIS (STAR-RIS). Hence, in this paper, we focus on  the downlink achievable rate and its optimization of a STAR-RIS-assisted mMIMO  system. Contrary to previous works on STAR-RIS, we consider mMIMO, correlated fading, and multiple user equipments (UEs) at both sides of the RIS. In particular, we introduce an estimation approach of the aggregated channel with the main benefit of reduced overhead  links instead of estimating the individual  channels.  	{Next, leveraging channel hardening in mMIMO and the use-and-forget bounding technique, we obtain an achievable rate in closed-form that only depends on statistical channel state information (CSI). To optimize the amplitudes and phase shifts of the STAR-RIS, we employ a projected gradient ascent method (PGAM) that simultaneously adjusts the amplitudes and phase shifts for both energy splitting (ES) and mode switching (MS) STAR-RIS operation protocols.} By considering large-scale fading, the proposed optimization can be performed every several coherence intervals, which can significantly reduce overhead. Considering that STAR-RIS has  twice the number of controllable parameters 
		compared to conventional reflecting-only RIS, this accomplishment offers substantial practical benefits. Simulations are carried out to verify the analytical results, reveal the interplay of the achievable rate with fundamental parameters, and show the superiority of STAR-RIS regarding its achievable rate compared to its  reflecting-only counterpart.
	\end{abstract}\vspace{-20pt}
	\begin{keywords}
		Simultaneously transmitting and reflecting RIS, correlated Rayleigh fading, imperfect CSI, achievable rate, 6G networks.
	\end{keywords}
	
	\section{Introduction}
	Reconfigurable intelligent surfaces (RIS) have emerged as a promising technology to meet the requirements of  sixth-generation (6G) networks such as a 1000-fold capacity increase together with increased connectivity among billions of devices \cite{Wu2019,Basar2019,Kisseleff2020}. A RIS consists of a metamaterial layer of low-cost controllable elements. Among its significant benefits is that its control signals can be dynamically adjusted to steer the impinging waves in specific directions and shape the propagation environment while providing uninterrupted service not only with low hardware cost, but also with low power consumption due to the absence of any power amplifiers.
	
	Most of the existing works on RIS have assumed that both the transmitter and the receiver are found on the same side of the surface, i.e., only reflection takes place \cite{Wu2019,Basar2019,Bjoernson2020,VanChien2021,Papazafeiropoulos2021,Papazafeiropoulos2022}. However, practical applications might include user equipments (UEs) on both sides of the RIS, which contain the spaces in front and behind the surface. Recently, advancements in programmable metamaterials have enabled the technology of simultaneously transmitting and reflecting RIS (STAR-RIS).\footnote{We note that the word ``transmitting'' does not correspond to active transmission but implies coverage of the UEs   at the other side of the RIS.} Hence, STAR-RIS has been proposed as a technology to satisfy this demand, i.e., it provides full space coverage by changing the amplitudes and phases of the impinging waves     \cite{Xu2021,Mu2021,Niu2021,Wu2021,Niu2022}. For instance,  in \cite{Xu2021}, the authors provided a general 	hardware model and two-channel models corresponding to the near-field region and the far-field region of STAR-RIS with only two UEs. Also, they showed  that the  coverage and diversity gain are greater than  reflecting-only/conventional RIS-assisted systems.  Furthermore, in \cite{Mu2021}, three operating protocols for adjusting the transmission and reflection coefficients of the transmitted and reflected signals were suggested, namely, energy splitting (ES), mode switching (MS), and time switching (TS). 
	
	In particular,	 most existing works on  RIS-aided systems have assumed perfect CSI, but this is a highly unrealistic assumption since practical systems have imperfect CSI. The accuracy of the channel state information (CSI) at the transmitter side is crucial to achieving a high beamforming gain of RIS, which includes channels between the transmitter and the UEs \cite{Zheng2022}. However, the acquisition of CSI is challenging because of the following reasons. First, RIS, in general, consists of passive elements to perform the desired reflecting operation, which makes any active transmission or reception infeasible, i.e., it cannot perform any sampling or processing of the pilots  \cite{Wu2019}. For this reason, an alternative method is the estimation of the aggregated transmitter-RIS-receiver channel by sending appropriate pilot symbols \cite{Yang2020b}. Second, RIS are generally large and consist of a large number of elements, and thus, induce  high training overhead for channel estimation (CE), which results in spectral efficiency (SE) reduction \cite{Wang2020}. 
	
	Various CE schemes have been proposed to address this issue \cite{Zheng2019,Mishra2019,He2019,Elbir2020,Nadeem2020}. For example, in  \cite{Mishra2019}, an ON/OFF CE method was proposed, where the estimates of all RIS-assisted channels for a single-user MISO system are obtained one-by-one. Note that in the case of  multi-user systems, this model was extended, assuming  all RIS elements to be active during training, but  the number of sub-phases is required to be at least equal  to the number of RIS elements \cite{Nadeem2020}. Although that method provides better CE as the number of sub-phases increases, the  achievable rate decreases because the data transmission phase takes a smaller fraction of the coherence time due to excessive training overhead. Also, that method computes the estimates of the channels of the individual RIS elements but the covariance of the channel vector from all RIS elements to a specific UE is unknown. Especially, in the case of STAR-RIS, CE becomes more challenging because UEs are located in both transmission and reflection regions, which requires different passive beamforming matrices (PBMs). In \cite{Wu2021}, a CE scheme was presented but did not account for multiple antennas at the BS, multiple UEs, and correlated fading.
	
	In parallel, many early works on conventional RIS assumed independent Rayleigh fading  such as \cite{Wu2019}, but recently, it was shown that RIS correlation should be considered because it is unavoidable in practical systems \cite{Bjoernson2020}. To this end, several works on conventional RIS have taken into account  the impact of RIS correlation \cite{VanChien2021,Papazafeiropoulos2021}, but only \cite{Wang2021b} has considered fading correlation on a STAR-RIS assisted system. Furthermore, except \cite{Mu2021,Niu2021,Niu2022}, all other works have assumed a single-antenna transmitter. Also, all previous studies on STAR-RIS have only  considered a single UE on each side of the STAR-RIS. In this paper, we consider a more general case where multiple UEs are present on each side of the STAR-RIS.

	\textit{Contributions}:  The observations above indicate the topic of this work, which concerns the study and design of a STAR-RIS assisted mMIMO system under the realistic conditions of imperfect CSI and correlated fading. These realistic assumptions and the consideration  of multiple UEs at each side of the STAR-RIS make it extremely difficult for the derivations of the achievable rate and the resulting optimization of the amplitudes and phase shifts of the STAR-RIS. Our main contributions are summarized as follows:
	\begin{itemize}
		\item  Aiming to characterize the potentials of STAR-RIS under realistic assumptions, we include the effect of spatially correlated fading at both the BS and the STAR-RIS.\footnote{In the case of the active beamforming, being MRT in this work, it is designed based on the instantaneous channel, which depends on the correlation of the aggregated channel described by (6). In the case of the passive beamforming, it is designed based on statistical CSI in terms of path loss and correlation. Specifically, the sum-rate expression depends only on these large scale statistics, which vary every several coherence intervals. Hence, passive beamforming can be optimized at every several coherence intervals. Moreover, given that we rely on  the statistical CSI approach, if no correlation is considered, the aggregated correlation will not depend on the phase shifts, which means that the sum rate cannot be optimized with respect to the phase shifts.}  In particular, we consider a massive multiple-input multiple-output (mMIMO) system with  a BS having a large but finite number of antennas. Under this general setup, we derive the downlink achievable  spectral efficiency (SE) of a STAR-RIS-assisted mMIMO system with imperfect CSI and correlated fading in closed form that depends only on large-scale statistics, which has not been known previously. Moreover, we achieve this by a unified analysis of the channel estimation and data transmission phases for UEs located in either the $t$ or $r$ regions, which distinguishes our work from previous research.
		\item  Contrary to \cite{Xu2021,Mu2021,Niu2021,Wu2021,Niu2022} which have assumed a single UE at each side of the surface, we consider multiple UEs at each side of the RIS,  which are served in the same time-frequency response.
		\item  We apply the linear minimum mean square error (LMMSE) method to perform CE, and obtain closed-form expressions with lower overhead than other CE methods suggested for RIS-assisted  systems. Specifically, we demonstrate that LMMSE can be applied without the need for a tailored design for STAR-RIS under conditions of statistical CSI. Note that previous works do not provide analytical expressions and/or do not take into account  the spatial correlation at the RIS \cite{Xu2021,Mu2021,Niu2021,Wu2021,Niu2022}. 
		\item   Our analysis relies on statistical CSI, meaning that our closed-form expressions are dependent only on large-scale fading that changes at every several coherence intervals. Thus, the proposed optimization of the STAR-RIS can take place at every several coherence intervals, which saves significant overhead. On the contrary, previous studies, which are based on instantaneous CSI changing at each coherence interval, might not be feasible in practice due to inherent large overheads.\footnote{In this work, we have followed the two-timescale transmission protocol approach as  	in \cite{Zhao2020},  where a maximisation of the achievable sum rate of a RIS-assisted multi-user multi-input single-output (MU-MISO) system took place. According to this approach, the precoding is designed in terms of instantaneous CSI, while the RIS   phase shifts is optimized by using statistical CSI. Notably, all works, which are based on statistical CSI, have relied on the two-timescale protocol. Examples are the  study of the impact of hardware impairments on the sum rate and the minimum rate in \cite{Papazafeiropoulos2021} and \cite{Papazafeiropoulos2021b}, respectively.}
		\item  We formulate the problem of finding the amplitudes and phase shifts of the STAR-RIS to maximize the achievable sum SE. {Our optimization framework considers multiple users at each side of the STAR-RIS in a unified manner.} Despite its non-convexity, we derive an iterative efficient method based on the projected gradient ascent method in which both amplitudes and phase shifts of the STAR-RIS are updated simultaneously at each iteration.  To the best of our knowledge, we are the first to optimize simultaneously the amplitudes and the phase shifts of the PBM in a STAR-RIS system. This is a significant contribution since other works optimize only the phase shifts or optimize both the amplitudes and the phase shifts in an  alternating optimization manner. Moreover, as large-scale fading is considered, our optimization has very lower overhead in terms of complexity, training,  and feedback compared to other works which rely on instantaneous CSI such as \cite{Mu2021}. Notably, this property is  important for STAR-RIS applications, which have twice the number of optimization variables compared to reflecting-only RIS. {We also remark that the beamforming optimization based on statistical CSI for STAR-RIS has not been investigated previously.}
		\item Simulations and analytical results are provided to shed light on the impact of various parameters and to show the superiority of STAR-RIS over conventional  RIS. For example, we find that the system  performance decreases as the RIS correlation increases.
	\end{itemize}
	
	\textit{Paper Outline}: The remainder of this paper is organized as follows. Section~\ref{System} presents the system model of a STAR-RIS-assisted mMIMO system with correlated Rayleigh fading. Section~\ref{ChannelEstimation} provides the CE. Section~\ref{PerformanceAnalysis} presents the downlink data transmission with the derived downlink sum SE. Section \ref{PSConfig} provides the simultaneous amplitudes and  phase-shifts configuration concerning both  the PBMs for the  transmission and reflection regions. The numerical results are placed in Section~\ref{Numerical}, and Section~\ref{Conclusion} concludes the paper.
	
	\textit{Notation}: Vectors and matrices are denoted by boldface lower and upper case symbols, respectively. The notations $(\cdot)^\T$, $(\cdot)^\H$, and $\tr\!\left( {\cdot} \right)$ describe the transpose, Hermitian transpose, and trace operators, respectively. Moreover, the notations $ \arg\left(\cdot\right) $,  $\EE\left[\cdot\right]$,  and $ \mathrm{Var}(\cdot) $ express the argument function,  the expectation, and variance operators, respectively. The notation  $\diag\left(\bA\right) $ describes a vector with elements equal to the  diagonal elements of $ \bA $, the notation  $\diag\left(\bx\right) $ describes a diagonal  matrix whose elements are $ \bx $, while  $\bb \sim \cC\cN{(\b0,\mathbf{\Sigma})}$ describes a circularly symmetric complex Gaussian vector with zero mean and a  covariance matrix $\mathbf{\Sigma}$.

	\section{System Model}\label{System}
	We consider a STAR-RIS-aided system,  where a  BS with an $ M $-element uniform linear array (ULA) serves simultaneously $ K $ single-antenna UEs that are distributed on both sides of the STAR-RIS, as illustrated in Fig. \ref{Fig1}. Specifically, $ \mathcal{K}_{t}=\{1,\ldots,K_{t} \} $ UEs are located in the transmission region $ (t) $ and $ \mathcal{K}_{r}=\{1,\ldots,K_{r} \} $ UEs are  located in the reflection region $ (r) $, respectively, where $ K_{t}+K_{r}=K $. Also, we denote by $ \mathcal{W} = \{w_{1}, w_{2}, ..., w_{K}\}  $ the set that defines the RIS operation mode for each of the $ K $ UEs. In particular, if the $ k $th UE is located behind the STAR-RIS (i.e., $ k\in   \mathcal{K}_{t}$), then $ w_{k} = t $, while $ w_{k} = r $ when the $ k $th UE is facing the STAR-RIS (i.e., $ k\in   \mathcal{K}_{r}$).  Moreover, we assume  direct links between the BS and UEs.  The RIS consists of a uniform planar array (UPA) composed of $ N_{\mathrm{h}} $ horizontally 
	and $ N_{\mathrm{v}} $ vertically  passive elements, which belong to the set of $\mathcal{N}=\{1, \ldots, N \}$ elements, where $ N= N_{\mathrm{h}} \times N_{\mathrm{v}} $ is the total number of RIS elements.
	\begin{figure}[!h]
		\begin{center}
			\includegraphics[width=0.9\linewidth]{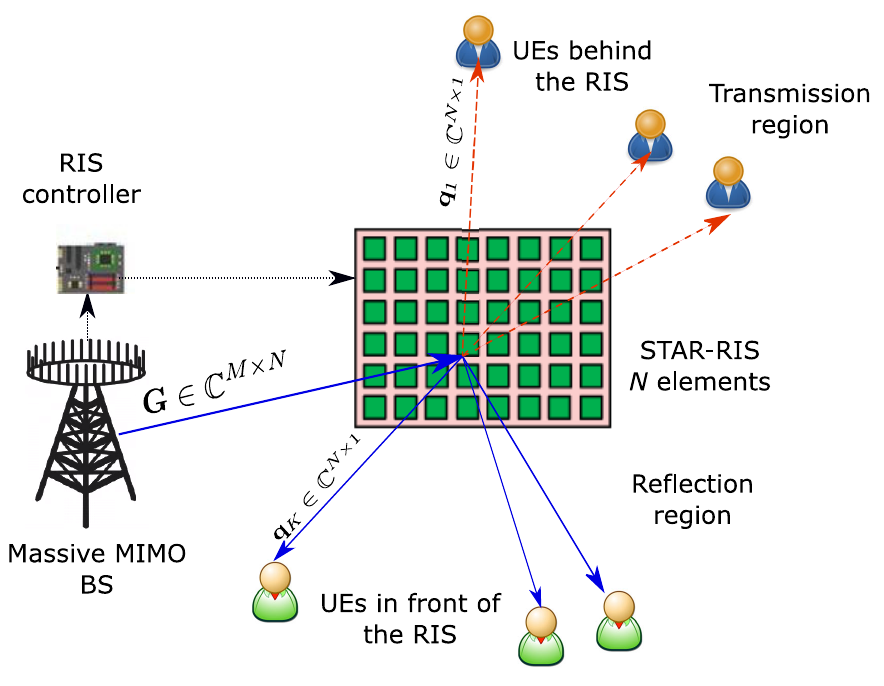}
			\caption{{ A mMIMO STAR-RIS assisted system with multiple UEs at transmission and reflection regions.  }}
			\label{Fig1}
		\end{center}
	\end{figure} 
	The STAR-RIS is able to configure  the transmitted ($ t $) and reflected ($ r $) signals by two independent coefficients. In particular, let $ t_{n} =( {\beta_{n}^{t}}e^{j \phi_{n}^{t}})s_{n}$ and $ r_{n}=( {\beta_{n}^{r}}e^{j \phi_{n}^{r}})s_{n} $ denote the transmitted 	and reflected signal by the $ n $th STAR-RIS element, respectively.\footnote{Note that here, we use $ \beta_{i}^{w_k} $, instead of $ \sqrt{\beta_{i}^{w_k}} $ as in \cite{Xu2021}, to denote the amplitude of the $ i $th RIS element in mode $ w_k $.  The reason for this change will become clear when  we present our proposed algorithm in Section \ref{PSConfig}.
	}  The amplitude and phase parameters  $ {\beta_{n}^{w_{k}}}\in [0,1] $ and $ \phi_{n}^{w_{k}} \in [0,2\pi)$, where the $  k $th UE can be in any of the two regions that  corresponds also to the  RIS mode, i.e.  transmission ($ t $) or reflection  ($ r $)
	\cite{Xu2021}, are independent.
	This model suggests that  $ \phi_{n}^{t} $ and $ \phi_{n}^{r} $ can be chosen independently, but  the choice of the amplitudes is based on the relationship expressed by the law of energy conservation as
	\begin{align}
		(\beta_{n}^{t})^{2}+(\beta_{n}^{r})^{2}=1,  \forall n \in \mathcal{N}.
	\end{align}
	
	Henceforth, for the sake of exposition, we denote $ \theta_{i}^{w_{k}}=e^{j\phi_{i}^{w_{k}}} $.

	\subsection{Operation Protocols}
	Our analysis is dedicated to the ES/MS protocols, which were presented in \cite{Mu2021}. Herein, we outline them by providing their main points.
	\subsubsection{ES protocol} All RIS elements serve simultaneously all UEs  in both   $ t $ and $ r $ regions. Especially, the  PBM for the $ k
	$th  UE is expressed as $ \bPhi_{w_{k}}^{\mathrm{ES}}=\diag( {\beta_{1}^{w_{k}}}\theta_{1}^{w_{k}}, \ldots,  {\beta_{N}^{w_{k}}}\theta_{N}^{w_{k}}) \in \mathbb{C}^{N\times N}$, where $ \beta_{n}^{w_{k}} \ge 0 $, $ 		(\beta_{n}^{t})^{2}+(\beta_{n}^{r})^{2}=1 $, and $ |\theta_{n}^{w_{k}}|=1, \forall n \in \mathcal{N} $.
	
	\subsubsection{MS protocol} The RIS elements are partitioned  into two groups of $ N_{t} $ and $ N_{r} $ elements that serve UEs in the  $  t$ and $ r $ regions, respectively. In other words,  $ N_{t}+N_{r}=N $. The PBM for $  k \in \mathcal{K}_{t} $ or $  k \in \mathcal{K}_{r} $
	is given by $ \bPhi_{w_{k}}^{\mathrm{MS}}=\diag( {\beta_{1}^{w_{k}}} \theta_{1}^{w_{k}}, \ldots,  {\beta_{N}^{w_k}}\theta_{N}^{w_{k}}) \in \mathbb{C}^{N\times N}$, where $ \beta_{n}^{w_{k}}\in \{0,1\}$, $ 	(\beta_{n}^{t})^{2}+(\beta_{n}^{r})^{2}=1 $, and $ |\theta_{i}^{w_{k}}|=1, \forall n \in \mathcal{N} $. As can be seen, this protocol is  a special case of the ES protocol, where the amplitude coefficients for transmission and reflection are restricted to binary values. As a result, the MS protocol is inferior of the ES counterpart since it cannot achieve the full-dimension transmission and reflection beamforming gain. Despite this performance degradation, it brings the advantage of lower computational complexity regarding the  PBM design.

	\subsection{Channel Model}\label{ChannelModel} 
	We assume  narrowband quasi-static block fading channels with each block having a duration of $\tau_{\mathrm{c}}$ channel uses. We adopt the standard time-division-duplex (TDD) protocol, which is preferable in mMIMO systems. Within TDD, we assume that each block  includes $\tau$ channel uses for the uplink training phase and $\tau_{\mathrm{c}}-\tau$ channel uses for the downlink data transmission phase. \textcolor{black}{Notably, contrary to other works, we aim to achieve  a unified analysis  regarding the channel estimation and data transmission phase that applies to a UE found in any of the $ t $ or $ r $ regions.}

	Let  $ \bG=[\bg_{1}\ldots,\bg_{N} ] \in \mathbb{C}^{M \times N}$ be the channel between the BS and the STAR-RIS with $ \bg_{i} \in \mathbb{C}^{M \times 1}$ for $ i\in \mathcal{N} $. Also,		$ \bq_{k} \in \mathbb{C}^{N \times 1}$
	denotes the channel between the STAR-RIS and UE $ k $ that can be found on either side. The direct link between the BS  and UE  $ k $ is denoted as $ \bd_{k} $.  On this ground, we assume that all links are subject to correlated Rayleigh fading, which is normally the case in practice \cite{Bjoernson2020}.\footnote{\textcolor{black}{The consideration of correlated Rician fading, which includes an LoS component, is the topic of future work.}} In particular, we have
	\begin{align}
		\bG&=\sqrt{\tilde{ \beta}_{g}}\bR_{\mathrm{BS}}^{1/2}\bD\bR_{\mathrm{RIS}}^{1/2},\label{eq2}\\
		\bq_{k}&=\sqrt{\tilde{ \beta}_{k}}\bR_{\mathrm{RIS}}^{1/2}\bc_{k},\\	
		\bd_{k}&=\sqrt{\bar{ \beta}_{k}}\bR_{\mathrm{BS}}^{1/2}\bar{\bc}_{k},
	\end{align}where $ \bR_{\mathrm{BS}} \in \mathbb{C}^{M \times M} $ and $ \bR_{\mathrm{RIS}} \in \mathbb{C}^{N \times N} $, assumed to be known by the network, express the deterministic Hermitian-symmetric positive semi-definite correlation matrices at the BS and the RIS respectively.\footnote{Many previous works have assumed that the channel between the BS and the RIS is deterministic expressing a line-of-sight (LoS) component \cite{Nadeem2020,Papazafeiropoulos2021}, while the analysis here is more general since we assume that all links are correlated Rayleigh fading distributed. In particular, $ \bG $, as expressed in \eqref{eq2}  is based on the Kronecker channel model.}  Regarding $  \bR_{\mathrm{BS}} $, it can be modeled e.g., as in \cite{Hoydis2013}, and $ \bR_{\mathrm{RIS}} $ is modeled as in \cite{Bjoernson2020}. 	 Moreover, $\tilde{ \beta}_{g} $, $ \bar{ \beta}_{k} $, and $\tilde{ \beta}_{k} $  express the path-losses  of the BS-RIS, BS-UE $ k $, and RIS-UE $ k $ links in $ t $ or $r  $ region, respectively. Also, $ \mathrm{vec}(\bD)\sim \mathcal{CN}\left(\b0,\Id_{MN}\right) $, $ \bc_{k} \sim \mathcal{CN}\left(\b0,\Id_{N}\right) $, and $ \bar{\bc}_{k} \sim \mathcal{CN}\left(\b0,\Id_{N}\right) $ express the corresponding fast-fading components. 	
	
	\textcolor{black}{We note that the correlation matrices $ \bR_{\mathrm{RIS}} $ and $ \bR_{\mathrm{BS}} $ can be assumed to be known by the network since they can be obtained by existing estimation methods \cite{Neumann2018,Upadhya2018}. Alternatively, we can practically calculate the covariance matrices for both $ \bR_{\mathrm{RIS}} $ and $ \bR_{\mathrm{BS}}$, despite the fact that $ \bR_{\mathrm{RIS}} $ is passive. Especially, the expressions for these covariance matrices depend on the distances between the RIS elements and the BS antennas, respectively, as well as the angles between them. The distances are known from the construction of the RIS and the BS, and the angles can be calculated when the locations are given.  Hence, the covariance matrices can be considered to be known.}
	
	Given the PBM,  the aggregated channel vector for UE $ k $ \textcolor{black}{$ \bh_{k}=\bd_{k}+ \bG\bPhi_{w_{k}} \bq_{k} $} has a covariance matrix  $ \bR_{k}=\EE\{\bh_{k}\bh_{k}^{\H}\} $ given by
	\begin{align}
		\textcolor{black}{	\bR_{k}=\bar{ \beta}_{k}\bR_{\mathrm{BS}}+\hat{\beta}_{k}\tr(\bR_{\mathrm{RIS}} \bPhi_{w_{k}} \bR_{\mathrm{RIS}}  \bPhi_{w_{k}}^{\H})\bR_{\mathrm{BS}},\label{cov1}}
	\end{align}
	where we have used the independence between $ \bG $ and $ \bq_{k} $, $\hat{\beta}_{k}= \tilde{ \beta}_{g}\tilde{ \beta}_{k} $, $ \EE\{	\bq_{k}	\bq_{k}^{\H}\} =\tilde{ \beta}_{k} \bR_{\mathrm{RIS}}$, and  $ \EE\{\bV \bU\bV^{\H}\} =\tr (\bU) \Id_{M}$ with $\bU  $ being a deterministic square matrix, and $ \bV $ being any matrix with independent and identically distributed (i.i.d.) entries of zero mean and unit variance. Notably, when $\bR_{\mathrm{RIS}} =\Id_{N} $, $ \bR_{k} $ does not depend on the phase shifts but only on the amplitudes, as also observed in \cite{Papazafeiropoulos2022}.
	
	\textcolor{black}{\begin{remark}
			As shown in \eqref{cov1}, when independent Rayleigh fading is assumed, i.e., $\bR_{\mathrm{RIS}}=\Id_{N}$ and $\bR_{\mathrm{BS}}=\Id_{M}$, the covariance matrix of the aggregated channel becomes $\bR_{k}=(\bar{\beta}_{k}+\hat{\beta}_{k}\sum_{i=1}^{N}(\beta_{i}^{w_{k}})^{2})\Id_{M}$, which is independent of the phase shifts. 	 This reduces significantly the capability of the RIS in forming narrow beams and thus the performance is degraded accordingly.  Therefore, it is not possible to optimize the achievable rate with respect to the phase shifts under independent Rayleigh fading conditions.\footnote{\textcolor{black}{It is important to note that the recent works based on statistical CSI, such as \cite{VanChien2021,Papazafeiropoulos2021,Papazafeiropoulos2022,Zhao2020,Papazafeiropoulos2021b}, have shown a similar observation, i.e., in the case of no RIS correlation, the covariance matrix of the  aggregated channel $ \bR_{k} $  does not depend on the phase shifts.} } However, in practice, correlated fading is unavoidable, which enables the optimization of the surface in terms of the phase shifts.
	\end{remark}}

	\section{Channel Estimation}\label{ChannelEstimation}
	In practical systems, perfect CSI cannot be obtained. Especially, in mMIMO systems,  the TDD protocol is adopted and channels are estimated  by an uplink training phase with pilot symbols \cite{Bjoernson2017}. However, a RIS, being implemented by nearly passive elements without any RF chains, cannot process the estimated channels and obtain the received pilots by UEs. Also, it cannot transmit any pilot sequences to the BS for channel estimation.
	
	{\color{black}In general, there are two  approaches to channel estimation for RIS-aided communication systems, one focusing on the estimation of the individual channels such as \cite{Yang2020b,Wu2021,Zheng2022}, and the other obtaining the estimated aggregated channel such as  \cite{Nadeem2020, Papazafeiropoulos2021,Deshpande2022}. The first benefit of the latter approach is that its implementation  does not require any  extra hardware and power cost. Also, the estimated aggregated BS-RIS-user channel is sufficient for the transmission beamforming design 		for the RIS-related links. 	  It is easy to see that the  BS-RIS channel has a large dimension in the considered system, which results in a prohibitively high pilot overhead if individual channels need to be estimated.  This issue motivates us to employ the second approach in this paper, which has lower overhead and allows  estimated channels to be expressed in closed form. We will now provide the details of the adopted channel estimation method.}
	
	We assume that all UEs either in  $ t $ or $ r $ region send orthogonal pilot sequences. Specifically, we denote by $\bx_{k}=[x_{k,1}, \ldots, x_{k,\tau}]^{\H}\in \mathbb{C}^{\tau\times 1} $ the pilot sequence of UE $ k  $ that can be found in any of the two regions since the duration of the uplink training phase is $ \tau $ channel uses. Note that $ \bx_{k}^{\H}\bx_{l}=0~\forall k\ne l$ and $ \bx_{k}^{\H}\bx_{k}= \tau P$ joules with $ P =|x_{k,i}|^{2} ,~\forall k,i$, i.e., it is assumed that all UEs use the same average transmit power during the training phase.
	
	The received signal by the BS for  the whole uplink training period is written as
	\begin{align}
		\bY^{\tr}=\sum_{i=1}^{K}\bh_{i}\bx_{i}^{\H} +
		\bZ^{\tr},\label{train1}
	\end{align}
	where  $ \bZ^{\tr} \in \mathbb{C}^{M \times \tau} $ is the received AWGN matrix having independent columns with each one distributed as $ \mathcal{CN}\left(\b0,\sigma^2\Id_{M}\right)$. Obviously, in \eqref{train1}, there is a contribution from UEs of both regions.
	
	Multiplication of \eqref{train1} with the transmit training sequence from UE $ k $  removes the  interference  by other UEs that can be found in the same  or in the opposite region, and gives
	\begin{align}
		\br_{k}=\bh_{k}+\frac{\bz_{k}}{ \tau P},\label{train2}
	\end{align}
	where  $ \bz_{k}=\bZ^{\tr} \bx_{k}$.

	\begin{lemma}\label{PropositionDirectChannel}
		The LMMSE estimate of the aggregated channel $ \bh_{k} $ between the BS and  UE $ k $  is given by
		\begin{align}
			\hat{\bh}_{k}=\bR_{k}\bQ_{k} \br_{k},\label{estim1}
		\end{align}
		where $ \bQ_{k}\!=\! \left(\!\bR_{k}\!+\!\frac{\sigma^2}{ \tau P }\Id_{M}\!\right)^{\!-1}$, and $ \br_{k}$ is the noisy channel given by \eqref{train2}.
	\end{lemma}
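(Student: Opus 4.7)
The plan is to invoke the standard linear minimum mean square error (LMMSE) formula applied to the simplified observation \eqref{train2}. First, I would note that because the training sequences satisfy $\bx_k^{\H}\bx_l=0$ for $k\neq l$ and $\bx_k^{\H}\bx_k=\tau P$, correlating $\bY^{\tr}$ with $\bx_k/(\tau P)$ already kills every interfering-UE contribution, so the problem reduces to linearly estimating the zero-mean random vector $\bh_k$ from the noisy per-antenna measurement $\br_k=\bh_k+\bz_k/(\tau P)$. The remaining component of $\bY^{\tr}$ orthogonal to $\bx_k$ contains only the channels of UEs $l\neq k$ and noise, both uncorrelated with $\bh_k$, so $\br_k$ is a linear sufficient statistic and the LMMSE based on $\br_k$ coincides with that based on $\bY^{\tr}$.

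Next, I would collect the second-order statistics required by the LMMSE formula. The channel $\bh_k=\bd_k+\bG\bPhi_{w_k}\bq_k$ is zero mean, and by independence of $\bd_k$, $\bG$, and $\bq_k$ its covariance is precisely the matrix $\bR_k$ already derived in \eqref{cov1}. The noise $\bz_k=\bZ^{\tr}\bx_k$ is independent of $\bh_k$, and since $\bZ^{\tr}$ has i.i.d.\ columns with covariance $\sigma^2\Id_M$ one obtains $\EE[\bz_k\bz_k^{\H}]=\sigma^2\|\bx_k\|^2\Id_M=\sigma^2\tau P\,\Id_M$. Consequently, $\EE[\bh_k\br_k^{\H}]=\bR_k$ and $\EE[\br_k\br_k^{\H}]=\bR_k+\frac{\sigma^2}{\tau P}\Id_M$.

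I would then plug these moments into the orthogonality-principle expression $\hat{\bh}_k=\EE[\bh_k\br_k^{\H}]\left(\EE[\br_k\br_k^{\H}]\right)^{-1}\br_k$, which immediately yields $\hat{\bh}_k=\bR_k\bigl(\bR_k+\tfrac{\sigma^2}{\tau P}\Id_M\bigr)^{-1}\br_k=\bR_k\bQ_k\br_k$, matching the lemma statement.

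The only subtlety, rather than a genuine obstacle, is that $\bh_k$ is not Gaussian because $\bG\bPhi_{w_k}\bq_k$ is a bilinear form in two independent complex Gaussian factors, so the LMMSE estimator does not in general coincide with the conditional-mean (MMSE) estimator. However, the lemma claims optimality only within the class of \emph{linear} estimators, a class that depends solely on first- and second-order moments; Gaussianity is therefore not required, and the derivation above delivers the best linear estimate in closed form.
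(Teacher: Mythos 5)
Your proof is correct and follows essentially the same route as the paper's: compute $\EE\{\bh_k\br_k^{\H}\}=\bR_k$ and $\EE\{\br_k\br_k^{\H}\}=\bR_k+\frac{\sigma^2}{\tau P}\Id_M$, then substitute into the standard LMMSE/orthogonality-principle formula. Your added remarks on the sufficiency of $\br_k$ after despreading and on why Gaussianity is not needed for the \emph{linear} MMSE estimator are accurate elaborations of the same argument, not a different method.
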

	\begin{proof}
		Please see Appendix~\ref{lem1}.	
	\end{proof}
	
	The property of the orthogonality of LMMSE estimation gives the overall perfect channel in terms of the estimated channel $\hat{\bh}_{k}$ and estimation channel error vectors $\tilde{\bh}_{k}$ as
	\begin{align}
		\bh_{k}=\hat{\bh}_{k}+\tilde{\bh}_{k}\label{current}. \end{align}
	Both  $\hat{\bh}_{k}$ and $\tilde{\bh}_{k} $ have zero mean, and have variances  (cf. \eqref{var1}) 	\begin{align}
		\bPsi_{k}\!&=\!\bR_{k}\bQ_{k}\bR_{k},\label{Psiexpress}\\
		\tilde{\bPsi}_{k}&=\bR_{k}-\bPsi_{k},
	\end{align}respectively.  Given that $ \bh_{k} $ is not Gaussian, $\hat{\bh}_{k}$ and $\tilde{\bh}_{k}$ are not independent, but  they are uncorrelated and each of them has zero mean~\cite{Bjoernson2017}.
	
	{\color{black} It is clear from the above derivations that we  indeed follow a conventional channel estimation method from standard mMIMO systems, where only the aggregated instantaneous BS-UE channel $ \bh_{k} $ is estimated. In this way, the minimum pilot sequence length is $ K $, which is independent of the dimensions $ M $ and $ N $.  We note that if individual channels need to be estimated, the required complexity increases with $ M $ and $ N $, which are large in the considered system. Thus, the presented method of estimating aggregated channels offers significant overhead reduction. It is important to note that this channel estimation method is sufficient for the design of the precoder at the BS in statistical CSI-based approaches. 
		
		We also remark that the method of estimating aggregated channels presented above has not been previously applied to STAR-RIS-aided systems based on the two-timescale method, which adds to the novelty of our paper. Specifically, we have demonstrated that the same expression for  estimated channels can be used for users in either $t$ or $r$ regions, which has not been reported in previous papers  studying STAR-RIS. Note also that the proposed two-timescale transmission approach has a channel estimation phase that does not depend on $ N $, and thus,  it is applicable to both ES and MS protocols. However, the ES protocol has twice the number of optimizable variables and thus requires higher complexity compared to the MS protocol. An advantage of the two-timescale approach is that the surface needs to be  redesigned only when the statistical CSI changes. In contrast, instantaneous CSI-based schemes require beamforming calculations and information feedback in every channel coherence interval, leading to  high computational complexity, power consumption, and feedback overhead. For such schemes, the ES protocol is not practically appealing, and thus, our proposed two-timescale approach is certainly more viable.}

	\begin{remark}\label{rem3}
		Our analysis presented above relies on large-scale statistics for a given PBM, which is  obtained at every several coherence intervals. Thus, the optimization of the PBM that will be studied in the sequel is more practically appealing. Note that our method provides the estimated aggregated channel vector in closed-form. Other methods in the RIS literature such as \cite{He2019} do not result in analytical expressions, and  do not capture the correlation effect since they obtain the estimated channel per RIS element \cite{Nadeem2020}. Moreover, in the case of STAR-RIS, the only work on channel estimation is \cite{Wu2021} but it does not consider practical effects such as correlation and multiple antennas at the BS.
	\end{remark}
	
	\section{Downlink Data Transmission}\label{PerformanceAnalysis}
	The downlink data transmission from the BS to UE $ k $ in $ t  $ or $ r  $ region relies on TDD, which exploits channel reciprocity, i.e., the downlink channel equals the Hermitian transpose of the uplink channel. Hence, the received signal by UE $ k $ is expressed as
	\begin{align}
		r_{k}={\bh}^\H_{k}\bs+z_{k},\label{DLreceivedSignal}
	\end{align}
	where   $\bs=\sqrt{\lambda} \sum_{i=1}^{K}\sqrt{p_{i}}\bff_{i}l_{i}$ expresses the transmit signal vector  by the BS,  $ p_{i} $ is  the power allocated to UE $ i $,  and $ \lambda $ is a constant which is found such that $ \EE[\mathbf{s}^{\H}\mathbf{s}]=\rho $, where $ \rho $ is the total average power budget. Also, $z_{k} \sim \cC\cN(0,\sigma^{2})$ is the additive  white complex Gaussian noise at UE $k$. Moreover,   $\bff_{i} \in \bbC^{M \times 1}$ is the linear precoding vector and 
	$ l_{i} $ is     the corresponding data symbol with $ \EE\{|l_{i}|^{2}\}=1 $. In this paper we adopt equal power allocation among all UEs as usually happens in the mMIMO literature, i.e. $ p_i = \rho/K $ \cite{Hoydis2013}. Thus, $ \lambda $ is found to ensure $  \EE[\mathbf{s}^{\H}\mathbf{s}]=\rho$, which gives $ \lambda=\frac{K}{\EE\{\tr\bF\bF^{\H}\}} $,
	where $ \bF=[\bff_{1}, \ldots, \bff_{K}] \in\mathbb{C}^{M \times K}$.
	
	According to the technique in~\cite{Medard2000} and by exploiting that UEs do not have instantaneous CSI but are aware of only statistical CSI, the received signal by UE $ k $  can be written as 
	\begin{align}
		r_{k}\!	&=\!\frac{\sqrt{\lambda\rho}}{K} (\EE\{\bh_{k}^{\H}\bff_{k}\}l_{k}	\!+{\bh}_{k}^{\H}\bff_{k}l_{k}-\EE\{\bh_{k}^{\H}\bff_{k}\}l_{k}\nn\\
		&+\!\sum_{i\ne k}^{K}\bh^\H_{k}\bff_{i}l_{i}Z\!+\!z_{k}\label{DLreceivedSignal1}.
	\end{align}
	
	Now, by using the use-and-then-forget bound \cite{Bjoernson2017}, which relies on the common assumption of the worst-case uncorrelated additive noise for the inter-user interference, we obtain a lower bound on downlink average SE in bps/Hz. We remark that this lower bound is tight for mMIMO as the number of antennas is very large.  Specifically, the achievable sum SE is given by
	\begin{align}
		\mathrm{SE}	=\frac{\tau_{\mathrm{c}}-\tau}{\tau_{\mathrm{c}}}\sum_{k=1}^{K}\log_{2}\left ( 1+\gamma_{k}\right)\!,\label{LowerBound}
	\end{align}
	where  $ \gamma_{k}$ is the downlink  signal-to-interference-plus-noise ratio (SINR), and  the pre-log fraction corresponds to  the percentage of samples per coherence block  for downlink data transmission. Note that  according to the use-and-forget bounding technique, the downlink SINR is given by
	\begin{align}
		\gamma_{k}=	\frac{	S_{k}}{{I}_{k}},\label{gamma1}
	\end{align}
	where 	
	\begin{align}
		{{S}}_{k}&=|\EE\{\bh_{k}^{\H}\bff_{k}\}|^{2} \label{Sig} \\
		{{I}}_{k}&=\EE\big\{ \big| {\bh}_{k}^{\H}\hat{\bh}_{k}-\EE\big\{
		{\bh}_{k}^{\H}\hat{\bh}_{k}\big\}\big|^{2}\big\}\!+\!\sum_{i\ne k}^{K}|\EE\{\bh^\H_{k}\bff_{i}\}|^{2}\!+\!\frac{K\sigma^{2}}{\rho \lambda}.\label{Int}\end{align}
	
	It is obvious that the final expressions for $ {{S}}_k $ and $ {{I}}_k $ depend on  the choice of the precoder and the derivation of the SINR. In this regard, we note that maximum ratio transmission (MRT) and  regularized zero-forcing (RZF) precoders are common options in the mMIMO literature. Herein, we select MRT for the sake of simplicity while RZF will be investigated in a future work. {\color{black} Since  aggregated channels between the BS and users involve the indirect link through the STAR-RIS, it is challenging to evaluate \eqref{Sig} and \eqref{Int} in closed-form.  In the following proposition we present tight approximations of \eqref{Sig} and \eqref{Int}, which are then used to optimize the phase shifts.} 
	\begin{proposition}\label{Proposition:DLSINR}
		Let  $ \mathbf{f}_k = \hat{\mathbf{h}}_k $, i.e., MRT precoding being used, then {\color{black} a tight approximation} of the downlink achievable SINR of UE $k$ for a given PBM $ \Phi_{w_k} $  in a STAR-RIS assisted mMIMO system, accounting for imperfect CSI, is given by
		\begin{equation}
			\gamma_{k} \ {\color{black}\approx}\ 	\frac{S_{k}}{	\color{black}\tilde{I}_{k}},\label{gammaSINR}
		\end{equation}
		where
		\begin{align}
			{S}_{k}&=\tr^{2}\left(\bPsi_{k}\right)\!,\label{Num1}\\
			{\color{black}\tilde{I}_{k}}&=\sum_{i =1}^{K}\tr\!\left(\bR_{k}\bPsi_{i} \right)-\tr\left( \bPsi_{k}^{2}\right)+\frac{K\sigma^{2}}{ \rho}\sum_{i=1}^{K}\tr(\bPsi_{i}).\label{Den1}
		\end{align}
	\end{proposition}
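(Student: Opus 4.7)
The plan is to substitute the MRT choice $\bff_{k}=\hat{\bh}_{k}$ into \eqref{Sig}--\eqref{Int} and evaluate each expectation using only the second-order LMMSE identities established in Section~\ref{ChannelEstimation}, together with the mMIMO channel hardening property invoked above the use-and-forget bound. First, I would fix the normalization: because the data symbols are uncorrelated and unit power, $\EE\{\tr(\bF\bF^{\H})\}=\sum_{i}\tr(\bPsi_{i})$ by \eqref{Psiexpress}, so $\lambda=K/\sum_{i}\tr(\bPsi_{i})$, which determines the noise summand of \eqref{Den1}. The numerator is then immediate: the LMMSE orthogonality $\EE\{\tilde{\bh}_{k}^{\H}\hat{\bh}_{k}\}=0$ combined with the decomposition $\bh_{k}=\hat{\bh}_{k}+\tilde{\bh}_{k}$ of \eqref{current} collapses $\EE\{\bh_{k}^{\H}\hat{\bh}_{k}\}$ to $\EE\{\|\hat{\bh}_{k}\|^{2}\}=\tr(\bPsi_{k})$, yielding $S_{k}=\tr^{2}(\bPsi_{k})$ after squaring.

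For the denominator I would treat the two interference pieces separately. In the self-interference, I would split $\bh_{k}^{\H}\hat{\bh}_{k}=\|\hat{\bh}_{k}\|^{2}+\tilde{\bh}_{k}^{\H}\hat{\bh}_{k}$: the first summand concentrates at $\tr(\bPsi_{k})$ by channel hardening, so its fluctuation is absorbed into the approximation, and only $\EE\{|\tilde{\bh}_{k}^{\H}\hat{\bh}_{k}|^{2}\}$ contributes randomness. Conditioning on $\hat{\bh}_{k}$ and using the error covariance $\tilde{\bPsi}_{k}=\bR_{k}-\bPsi_{k}$ rewrites that expectation as $\tr(\tilde{\bPsi}_{k}\bPsi_{k})=\tr(\bR_{k}\bPsi_{k})-\tr(\bPsi_{k}^{2})$, which accounts for both the $i=k$ contribution of the first sum in \eqref{Den1} and the lone $-\tr(\bPsi_{k}^{2})$ correction. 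For the cross-user term, $\hat{\bh}_{i}$ depends on $\bh_{k}$ only through the shared BS-RIS channel $\bG$; averaging over the mutually independent $\bq_{i}$ and $\bd_{i}$ removes the second-order coupling, leaving $\EE\{|\bh_{k}^{\H}\hat{\bh}_{i}|^{2}\}\approx\tr(\bR_{k}\bPsi_{i})$ by the tower property applied to $\bh_{k}^{\H}\EE\{\hat{\bh}_{i}\hat{\bh}_{i}^{\H}\}\bh_{k}$, which completes the remaining $i\ne k$ terms.

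The main obstacle is the non-Gaussianity of $\bh_{k}=\bd_{k}+\bG\bPhi_{w_{k}}\bq_{k}$, which is the sum of a Gaussian vector and the Gaussian-times-Gaussian product $\bG\bq_{k}$ coupled through the PBM; consequently $\hat{\bh}_{k}$ and $\tilde{\bh}_{k}$ are only uncorrelated rather than independent, and the fourth-order moments arising in the self-interference do not reduce exactly to clean trace expressions. Channel hardening is the single tool that neutralises this obstacle by forcing $\|\hat{\bh}_{k}\|^{2}$ to be effectively deterministic in the large-$M$ regime, and is precisely what turns the exact SINR into the tight approximation of \eqref{gammaSINR}; verifying that the neglected residual is of strictly lower order in $M$ than the retained terms in \eqref{Num1}--\eqref{Den1} is the step that I expect to require the most careful bookkeeping.
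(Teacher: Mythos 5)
Your proposal follows essentially the same route as the paper's Appendix B: LMMSE orthogonality collapses the numerator to $S_k=\tr^2(\bPsi_k)$, the self-interference variance is reduced to $\EE\{|\tilde{\bh}_k^{\H}\hat{\bh}_k|^2\}\approx\tr(\bR_k\bPsi_k)-\tr(\bPsi_k^2)$ with channel hardening playing exactly the role you identify (substituting for the missing independence of $\hat{\bh}_k$ and $\tilde{\bh}_k$ and killing the cross term $2\Re\{\EE\{\hat{\bh}_k^{\H}\hat{\bh}_k\hat{\bh}_k^{\H}\tilde{\bh}_k\}\}$), and the inter-user terms give $\tr(\bR_k\bPsi_i)$ by second-moment factorization, where you are if anything slightly more careful than the paper about the residual coupling through the shared BS--RIS channel $\bG$. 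The one bookkeeping slip is the normalization: with your $\lambda=K/\sum_i\tr(\bPsi_i)$ the noise term $K\sigma^2/(\rho\lambda)$ evaluates to $(\sigma^2/\rho)\sum_i\tr(\bPsi_i)$ rather than the $(K\sigma^2/\rho)\sum_i\tr(\bPsi_i)$ appearing in \eqref{Den1}; the paper's appendix instead uses $\lambda=1/\sum_i\tr(\bPsi_i)$, which conflicts with its own main-text definition $\lambda=K/\EE\{\tr(\bF\bF^{\H})\}$, so this factor of $K$ is an inconsistency inherited from the paper rather than a flaw in your argument.
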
 
	\begin{proof}
		Please see Appendix~\ref{Proposition1}.	
	\end{proof}
	\section{Simultaneous Amplitudes and  Phase Shifts Configuration}\label{PSConfig}
	It is critical to find the PBM to optimize a performance measure of STAR-RIS assisted systems. In this paper, assuming infinite-resolution phase shifters, we formulate the optimization problem for maximizing the  sum SE with imperfect CSI and correlated fading. As mentioned in the preceding section, there are two operation protocols: ES protocol and MS protocol. In the following two subsections  we deal with these two protocols.
	\subsection{Optimization of {\color{black}Amplitudes} and Phase Shifts for ES protocol}
	For the ES protocol the formulated problem reads
	\begin{equation}
		\begin{IEEEeqnarraybox}[][c]{rl}
			\max_{\thetv,\betv}&\quad {\color{black} f(\thetv,\betv) \triangleq \sum\nolimits_{k=1}^{K}\log_2(1+\frac{S_k}{\tilde{I}_k})}\\
			\mathrm{s.t}&\quad (\beta_{n}^{t})^{2}+(\beta_{n}^{r})^{2}=1,  \forall n \in \mathcal{N}\\
			&\quad\beta_{n}^{t}\ge 0, \beta_{n}^{r}\ge 0,~\forall n \in \mathcal{N}\\
			&\quad|\theta_{n}^{t}|=|\theta_{n}^{r}|=1, ~\forall n \in \mathcal{N}
		\end{IEEEeqnarraybox}\label{Maximization}\tag{$\mathcal{P}1$}
	\end{equation}
	where $\thetv=[(\thetv^{t})^{\T}, (\thetv^{r})^{\T}]^{\T} $ and $\betv=[(\betv^{t})^{\T}, (\betv^{r})^{\T}]^{\T} $. Note that to achieve a compact description we have vertically stacked $\thetv^{t}$ and $\thetv^{r}$ into  a single vector $\thetv$, and $\betv^{t}$ and $\betv^{r}$ into  a single vector $\betv$, respectively.  Also note that in the above problem formulation we have {\color{black}used the tight approximation of the SINR given in Proposition \ref{Proposition:DLSINR} to maximize the approximate sum SE, denoted by $f(\thetv,\betv)$}.
	For ease of exposition, we define two sets: $ \Theta=\{\thetv\ |\ |\theta_{i}^{t}|=|\theta_{i}^{r}|=1,i=1,2,\ldots N\} $, and $ \mathcal{B}=\{\betv\ |\ (\beta_{i}^{t})^{2}+(\beta_{i}^{r})^{2}=1,\beta_{i}^{t}\geq0,\beta_{i}^{r}\geq0,i=1,2,\ldots N\} $, which in fact together describe the feasible set of \eqref{Maximization}. Notably, the introduction of STAR-RIS imposes new challenges. In particular, the first constraint is not simple but includes the two types of passive beamforming, namely transmission and reflection beamforming, to be optimized, which are coupled with each other due to the energy conservation law.
	
	The  problem \eqref{Maximization} is non-convex and is coupled among the optimization variables, which are the amplitudes and the phase shifts for transmission and reflection. 	For the development of an efficient algorithm to solve  \eqref{Maximization} we remark that the sets $\Theta$ and $ \mathcal{B}$ are simple in the sense that their projection operators can be done in closed-form. This motivates us to apply the  projected  gradient ascent method (PGAM) \cite[Ch. 2]{Bertsekas1999} to optimize $\thetv$ and $\betv$, which is described next. \textcolor{black}{However, in the case of independent Rayleigh fading, $ \mathrm{SE} $ does not depend on $ \thetv $, which means that  optimization can take place only with respect to $ \betv $.}  
	
	
	The proposed PGAM consists of the following iterations
	\begin{subequations}\label{mainiteration}\begin{align}
			\thetv^{n+1}&=P_{\Theta}(\thetv^{n}+\mu_{n}\nabla_{\thetv}f(\thetv^{n},\betv^{n})),\label{step1} \\ \betv^{n+1}&=P_{\mathcal{B}}(\betv^{n}+{\mu}_{n}\nabla_{\betv}f(\thetv^{n},\betv^{n})).\label{step2} \end{align}
	\end{subequations}
	In the above equations, the superscript denotes the iteration count. From the current iterate $(\thetv^{n},\betv^{n})$ we move along the gradient direction to increase the objective. In \eqref{mainiteration}, $\mu_n$ is the step size for both $\thetv$ and $\betv$. 
	Also, in \eqref{mainiteration}, $P_{\Theta}(\cdot) $ and $ P_{\mathcal{B}}(\cdot) $ are the projections onto $ \Theta $ and $ \mathcal{B} $, respectively. 
	
	The choice of the step size in \eqref{step1} and \eqref{step2} is important to make the proposed PGAM converge. The ideal step size should be
	inversely proportional to the Lipschitz constant of the corresponding gradient but this is difficult to find for the considered problem. For this reason, we apply the Armijo-Goldstein backtracking line search to find the step size at each iteration. To this end, we define a quadratic approximation of $f(\thetv,\betv)$ as
	\begin{align}
		&	Q_{\mu}(\thetv, \betv;\bx,\by)=f(\thetv,\betv)+\langle	\nabla_{\thetv}f(\thetv,\betv),\bx-\thetv\rangle\nn\\
		&-\frac{1}{\mu}\|\bx-\thetv\|^{2}_{2}+\langle\nabla_{\betv}f(\thetv,\betv),\by-\betv\rangle-\frac{1}{\mu}\|\by-\betv\|^{2}_{2}.
	\end{align}    
	Note that in this paper we define $\langle\mathbf{x},\mathbf{y}\rangle=2\Re{\mathbf{x}^{\H}\mathbf{y}}$ for complex-valued $\mathbf{x}$ and ${\mathbf{y}}$ and $\langle\mathbf{x},\mathbf{y}\rangle=\mathbf{x}^{\T}\mathbf{y}$ for non complex-valued $\mathbf{x}$ and ${\mathbf{y}}$. 
	Now, let $ L_n>0 $, and $ \kappa \in (0,1) $. Then the step size $  \mu_{n} $ in \eqref{mainiteration} can be found as $ \mu_{n} = L_{n}\kappa^{m_{n}} $, where $ m_{n} $ is the
	smallest nonnegative integer satisfying
	\begin{align}
		f(\thetv^{n+1},\betv^{n+1})\geq	Q_{L_{n}\kappa^{m_{n}}}(\thetv^{n}, \betv^{n};\thetv^{n+1},{\color{black}\betv^{n+1}}),
	\end{align}
	which can be done by an iterative procedure. In the proposed PGAM, we use the step size at iteration $n$ as the initial step size at iteration $n+1$. The proposed PGAM is summarized in Algorithm \ref{Algoa1}. 
	\begin{algorithm}[th]
		\caption{Projected Gradient Ascent Method for the RIS Design\label{Algoa1}}
		\begin{algorithmic}[1]
			\STATE Input: $\thetv^{0},\betv^{0},\mu_{1}>0$, $\kappa\in(0,1)$
			\STATE $n\gets1$
			\REPEAT
			\REPEAT \label{ls:start}
			\STATE $\thetv^{n+1}=P_{\Theta}(\thetv^{n}+\mu_{n}\nabla_{\thetv}f(\thetv^{n},\betv^{n}))$
			\STATE $\betv^{n+1}=P_{B}(\betv^{n}+\mu_{n}\nabla_{\betv}f(\thetv^{n},\betv^{n}))$
			\IF{ $f(\thetv^{n+1},\betv^{n+1})\leq Q_{\mu_{n}}(\thetv^{n},\betv^{n};\thetv^{n+1},{\color{black}\betv^{n+1}})$}
			\STATE $\mu_{n}=\mu_{n}\kappa$
			\ENDIF
			\UNTIL{ $f(\thetv^{n+1},\betv^{n+1})>Q_{\mu_{n}}(\thetv^{n},\betv^{n};\thetv^{n+1},{\color{black}\betv^{n+1}})$}\label{ls:end}
			\STATE $\mu_{n+1}\leftarrow\mu_{n}$
			\STATE $n\leftarrow n+1$
			\UNTIL{ convergence}
			\STATE Output: $\thetv^{n+1},\betv^{n+1}$
		\end{algorithmic}
	\end{algorithm} 
	
	We present the complex-valued gradients in the following lemma.
	\begin{lemma}\label{LemmaGradients}
		The complex gradients $ \nabla_{\thetv}f(\thetv,\betv) $ and  $\nabla_{\betv}f(\thetv,\betv) $ are given in closed-forms by
		\begin{subequations}
			\begin{align}
				\nabla_{\thetv}f(\thetv,\betv) &=[\nabla_{\thetv^{t}}f(\thetv,\betv)^{\T}, \nabla_{\thetv^{r}}f(\thetv,\betv)^{\T}]^{\T},\\
				\nabla_{\thetv^{t}}f(\thetv,\betv)&=\frac{\tau_{\mathrm{c}}-\tau}{\tau_{\mathrm{c}}\log2}\sum_{k=1}^{K}\frac{	{\color{black}\tilde{I}_k}\nabla_{\thetv^{t}}{S_{k}}-S_{k}	\nabla_{\thetv^{t}}{{\color{black}\tilde{I}_k}}}{(1+\gamma_{k}){\color{black}\tilde{I}_k}^{2}} ,\\
				\nabla_{\thetv^{r}}f(\thetv,\betv)&=\frac{\tau_{\mathrm{c}}-\tau}{\tau_{\mathrm{c}}\log2}\sum_{k=1}^{K}\frac{	{\color{black}\tilde{I}_k}\nabla_{\thetv^{r}}{S_{k}}-S_{k}	\nabla_{\thetv^{r}}{{\color{black}\tilde{I}_k}}}{(1+\gamma_{k}){\color{black}\tilde{I}_k}^{2}}, 
			\end{align}
		\end{subequations}
		
		
		where
		\begin{subequations}
			\begin{align}
				\nabla_{\thetv^{t}}S_{k}&=\begin{cases}
					\nu_{k}\diag\bigl(\mathbf{A}_{t}\diag(\boldsymbol{{\beta}}^{t})\bigr) & w_{k}=t\\
					0 & w_{k}=r
				\end{cases}\label{derivtheta_t}\\
				\nabla_{\thetv^{r}}S_{k}&=\begin{cases}
					\nu_{k}\diag\bigl(\mathbf{A}_{r}\diag(\boldsymbol{{\beta}}^{r})\bigr) & w_{k}=r\\
					0 & w_{k}=t
				\end{cases}\label{derivtheta_r}\\
				\nabla_{\thetv^{t}}{\color{black}\tilde{I}_k} &=\diag\bigl(\tilde{\mathbf{A}}_{kt}\diag(\boldsymbol{{\beta}}^{t})\bigr)\label{derivtheta_t_Ik}\\
				\nabla_{\thetv^{r}}{\color{black}\tilde{I}_k} &=\diag\bigl(\tilde{\mathbf{A}}_{kr}\diag(\boldsymbol{\beta}^{r})\bigr)\label{derivtheta_r_Ik}
			\end{align}
		\end{subequations}
		with $\bA_{w_k}= \bR_{\mathrm{RIS}} \bPhi_{w_k} \bR_{\mathrm{RIS}} $ for $w_k\in\{t,r\}$, $ \textcolor{black}{\nu_{k}}=2\hat{\beta}_{k}\tr\left(\bPsi_{k}\right)\tr((\bQ_{k}\bR_{k}+\bR_{k}\bQ_{k}-\bQ_{k}\bR_{k}\bQ_{k})\bR_{\mathrm{BS}}) $, 
		\begin{equation}
			\tilde{\mathbf{A}}_{ku}=\begin{cases}
				\bigl(\bar{\nu}_{k}+\sum\nolimits _{i\in\mathcal{K}_{u}}^{K}\tilde{\nu}_{ki}\bigr)\mathbf{A}_{u} & w_{k}=u\\
				\sum\nolimits _{i\in\mathcal{K}_{u}}^{K}\tilde{\nu}_{ki}\mathbf{A}_{u} & w_{k}\neq u,
			\end{cases}\label{A_tilde_general}
		\end{equation}
		$u\in\{t,r\}$, $\bar{\nu}_{k}=\hat{\beta}_{k}\tr\bigl(\check{\boldsymbol{\Psi}}_{k}\mathbf{R}_{\mathrm{BS}}\bigr)$,
		$\tilde{\nu}_{ki}=\hat{\beta}_{k}\tr\bigl(\tilde{\mathbf{R}}_{ki}\mathbf{R}_{\mathrm{BS}}\bigr)$,
		$ \check{\bPsi}_{k}={\bPsi}-2(\bQ_{k}\bR_{k}\bPsi_{k}+\bPsi_{k}\bR_{k}\bQ_{k}-\bQ_{k}\bR_{k}\bPsi_{k}\bR_{k}\bQ_{k}) $, $ \bPsi=\sum_{i=1}^{K}\bPsi_{i}$, $\tilde{\mathbf{R}}_{ki}=\mathbf{Q}_{i}\mathbf{R}_{i}\bar{\mathbf{R}}_{k}-\mathbf{Q}_{i}\mathbf{R}_{i}\bar{\mathbf{R}}_{k}\mathbf{R}_{i}\mathbf{Q}_{i}+\bar{\mathbf{R}}_{k}\mathbf{R}_{i}\mathbf{Q}_{i}$, and $\bar{\mathbf{R}}_{k}=\mathbf{R}_{k}+\frac{K\sigma^{2}}{\rho}\mathbf{I}_{M}$.
		Similarly, the  gradient $\nabla_{\betv}f(\thetv,\betv) $ is given by
		\begin{subequations}\label{eq:deriv:wholebeta}
			\begin{align}
				\nabla_{\betv}f(\thetv,\betv) &=[\nabla_{\betv^{t}}f(\thetv,\betv)^{\T}, \nabla_{\betv^{r}}f(\thetv,\betv)^{\T}]^{\T},\\
				\nabla_{\betv^{t}}f(\thetv,\betv)&=\frac{\tau_{\mathrm{c}}-\tau}{\tau_{\mathrm{c}}\log2}\sum_{k=1}^{K}\frac{	{\color{black}\tilde{I}_{k}}\nabla_{\betv^{t}}{S_{k}}-S_{k}	\nabla_{\betv^{t}}{{\color{black}\tilde{I}_{k}}}}{(1+\gamma_{k}){\color{black}\tilde{I}_{k}}^{2}}, \label{gradbetat:final}\\
				\nabla_{\betv^{r}}f(\thetv,\betv)&=\frac{\tau_{\mathrm{c}}-\tau}{\tau_{\mathrm{c}}\log2}\sum_{k=1}^{K}\frac{	{\color{black}\tilde{I}_{k}}\nabla_{\betv^{r}}{S_{k}}-S_{k}	\nabla_{\betv^{r}}{{\color{black}\tilde{I}_{k}}}}{(1+\gamma_{k}){\color{black}\tilde{I}_{k}}^{2}} ,
			\end{align}
		\end{subequations} 
		where
		\begin{subequations}
			\begin{align}
				\nabla_{\betv^{t}}S_{k}&=\begin{cases}
					2\nu_k\Re\bigl\{\diag\bigl(\mathbf{A}_{k}\herm\diag(\btheta^{t})\bigr)\bigr\} & w_{k}=t\\
					0 & w_{k}=r
				\end{cases}\label{derivbeta_t}\\
				\nabla_{\betv^{r}}S_{k}&=\begin{cases}
					2\nu_k\Re\bigl\{\diag\bigl(\mathbf{A}_{k}\herm\diag(\btheta^{r})\bigr)\bigr\} & w_{k}=r\\
					0 & w_{k}=t
				\end{cases}\label{derivbeta_r}\\
				\nabla_{\betv^{t}}{\color{black}\tilde{I}_k} &=2\Re\bigl\{\diag\bigl(\tilde{\mathbf{A}}_{kt}\herm\diag(\boldsymbol{\btheta}^{t})\bigr)\bigr\}\\
				\nabla_{\betv^{r}}{\color{black}\tilde{I}_k} &=2\Re\bigl\{\diag\bigl(\tilde{\mathbf{A}}_{kr}\herm\diag(\boldsymbol{\btheta}^{r})\bigr)\bigr\}.
			\end{align}
		\end{subequations}
		Note that $\nabla_{\betv}f(\thetv,\betv)$ is real-valued.\end{lemma}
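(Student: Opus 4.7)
The plan is to apply the Wirtinger chain rule while exploiting a structural observation from \eqref{cov1}: the decision variables enter the SINR only through the scalar
\begin{equation}
\alpha_{k}\triangleq\tr\bigl(\bR_{\mathrm{RIS}}\bPhi_{w_{k}}\bR_{\mathrm{RIS}}\bPhi_{w_{k}}^{\H}\bigr),
\end{equation}
since $\bR_{k}=(\bar{\beta}_{k}+\hat{\beta}_{k}\alpha_{k})\bR_{\mathrm{BS}}$ is a scalar multiple of $\bR_{\mathrm{BS}}$ and hence commutes with $\bQ_{k}$ and with $\bR_{\mathrm{BS}}$ itself. This pairwise commutativity is what ultimately keeps the ensuing trace expressions tractable. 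Moreover, $\bR_{k}$ localizes to region $w_{k}$, so a variable indexed by region $u$ enters $\bR_{k}$ only when $w_{k}=u$, which accounts for the case-split (zero branches) in \eqref{derivtheta_t}--\eqref{derivbeta_r}.

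First I would carry out the outer differentiation: applying the log and quotient rules to $f=\frac{\tau_{\mathrm{c}}-\tau}{\tau_{\mathrm{c}}\log 2}\sum_{k}\log(1+S_{k}/\tilde{I}_{k})$ immediately yields the common factor $(\tilde{I}_{k}\nabla S_{k}-S_{k}\nabla \tilde{I}_{k})/((1+\gamma_{k})\tilde{I}_{k}^{2})$ shared by every component of the lemma, reducing the task to closed forms for $\nabla S_{k}$ and $\nabla \tilde{I}_{k}$. Direct Wirtinger differentiation of the Hermitian quadratic form in $\bPhi_{w_{k}}$ that defines $\alpha_{k}$ produces
\begin{equation}
\nabla_{\thetv^{w_{k}}}\alpha_{k}=\diag\bigl(\bA_{w_{k}}\diag(\betv^{w_{k}})\bigr),
\end{equation}
with $\bA_{w_{k}}=\bR_{\mathrm{RIS}}\bPhi_{w_{k}}\bR_{\mathrm{RIS}}$ as in the statement; the analogous real gradient with respect to the real variable $\betv^{w_{k}}$ follows by the Wirtinger chain rule through $p_{n}=\beta_{n}\theta_{n}$ and picks up the factor $2\Re\{\diag(\bA_{w_{k}}^{\H}\diag(\thetv^{w_{k}}))\}$ appearing in \eqref{derivbeta_t}--\eqref{derivbeta_r}.

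Second I would propagate these scalar differentials through $\bR_{k}\to\bQ_{k}\to\bPsi_{k}$. Using $d\bR_{k}=\hat{\beta}_{k}(d\alpha_{k})\bR_{\mathrm{BS}}$, the matrix-inverse differential $d\bQ_{k}=-\bQ_{k}(d\bR_{k})\bQ_{k}$, the product rule on $\bPsi_{k}=\bR_{k}\bQ_{k}\bR_{k}$, commutativity, and trace cyclicity, one obtains a closed-form scalar expression for $d\tr(\bPsi_{k})/d\alpha_{k}$ that matches the trace factor in $\nu_{k}$, so that $\nabla S_{k}=2\tr(\bPsi_{k})\cdot \bigl(d\tr(\bPsi_{k})/d\alpha_{k}\bigr)\cdot\nabla\alpha_{k}=\nu_{k}\nabla\alpha_{k}$, yielding \eqref{derivtheta_t}--\eqref{derivtheta_r}. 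Applying the same procedure term-by-term to $\tilde{I}_{k}=\sum_{i}\tr(\bR_{k}\bPsi_{i})-\tr(\bPsi_{k}^{2})+(K\sigma^{2}/\rho)\sum_{i}\tr(\bPsi_{i})$ and grouping by the region $u$ of the differentiation variable produces $\nabla_{\thetv^{u}}\tilde{I}_{k}=\diag(\tilde{\bA}_{ku}\diag(\betv^{u}))$: the coefficient $\bar{\nu}_{k}$ collects the $d\bR_{k}$ contribution (present only when $w_{k}=u$, and absorbing both the $-\tr(\bPsi_{k}^{2})$ subtraction and the $i=k$ self-term into $\check{\bPsi}_{k}$), while $\tilde{\nu}_{ki}$ collects each $d\bPsi_{i}$ contribution for $i\in\mathcal{K}_{u}$ with the noise term merged into $\bar{\bR}_{k}=\bR_{k}+(K\sigma^{2}/\rho)\bI_{M}$ to express the coefficient as $\tr(\tilde{\bR}_{ki}\bR_{\mathrm{BS}})$.

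The hard part is not any single matrix identity but the bookkeeping: the compact block $\tilde{\bA}_{ku}$ conceals a careful case analysis that must correctly merge the $i=k$ self-term (where $\bR_{k}$ and $\bPsi_{k}$ depend on the same variables and their joint differentials collapse into $\check{\bPsi}_{k}$) with the off-diagonal contributions $i\neq k$ (which reduce to $\tilde{\bR}_{ki}$), and must track which of the three summands in $\tilde{I}_{k}$ activates for each of the four region pairings $(w_{k},w_{i})\in\{t,r\}^{2}$. Once this case split is executed, the $\betv$-gradient in \eqref{eq:deriv:wholebeta} follows by an identical argument with $\diag(\betv^{u})$ replaced by $\diag(\thetv^{u})$ and a factor $2\Re\{\cdot\}$ applied, because $\betv$ is real while $\thetv$ varies on the complex unit torus.
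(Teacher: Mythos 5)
Your proposal follows essentially the same route as the paper's Appendix C: the outer quotient/log rule producing the common factor, the matrix-differential chain $d\mathbf{R}_{k}\to d\mathbf{Q}_{k}=-\mathbf{Q}_{k}d(\mathbf{R}_{k})\mathbf{Q}_{k}\to d\boldsymbol{\Psi}_{k}$, trace cyclicity to collect the Wirtinger coefficients of $d\boldsymbol{\theta}^{u\ast}$ (resp. the real differentials $d\boldsymbol{\beta}^{u}$ with the $2\Re\{\cdot\}$ factor), and the same grouping of the $\tilde{I}_{k}$ terms into $\check{\boldsymbol{\Psi}}_{k}$ and $\tilde{\mathbf{R}}_{ki}$ with the region-based case split. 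Your framing through the scalar $\alpha_{k}$ and the commutativity of $\mathbf{R}_{k}$ with $\mathbf{Q}_{k}$ is a mild streamlining rather than a different argument, so the two proofs coincide in substance.
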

	\begin{proof}
		Please see Appendix~\ref{lem2}.	
	\end{proof}
	\begin{remark}
		As mentioned earlier we use $ \beta_{i}^{w_k} $, instead of $ \sqrt{\beta_{i}^{w_k}} $ as in \cite{Xu2021}, to denote the amplitude of the $ i $th RIS element in mode $ w_k$. The purpose of that maneuver  is now clear. In fact, if $ \sqrt{\beta_{i}^{w_k}} $ were used to represent the amplitude, then the gradient $\nabla_{\betv}f(\thetv,\betv)$  would be similar to \eqref{eq:deriv:wholebeta} but contain the term  $ \sqrt{\beta_{i}^{w_k}} $ in the denominator. This will make $\nabla_{\betv}f(\thetv,\betv)$  ill-conditioned (i.e., extremely large), which in turn can cause numerical issues in the execution of Algorithm \ref{Algoa1} in practice.
	\end{remark}
	
	To conclude the description of Algorithm \ref{Algoa1}, we now provide the projection onto the sets  $ \Theta $ and $ \mathcal{B} $. First, it is straightforward to check that, for a given $\thetv\in \mathbb{C}^{2N\times 1}$  $P_{\Theta}(\thetv)$ is given by 
	\begin{equation}
		P_{\Theta}(\thetv)=\thetv/|\thetv|=e^{j\angle\thetv},
	\end{equation}
	where the operations in the right-hand side of the above equation are performed entrywise. 
	
	The projection  $P_{ \mathcal{B} }(\betv)$ deserves special attention. Note that the constraint $(\beta_{i}^{t})^{2}+(\beta_{i}^{r})^{2}=1,\beta_{i}^{t}\geq0,\beta_{i}^{r}\geq0$ indeed defines the first quadrant of the  unit circle. Thus, the expression of the projection onto $\mathcal{B}$ is rather complicated. To make $P_{\mathcal{B} }(\betv)$ more efficient we allow $\beta_{i}^{t}$ and $\beta_{i}^{r}$ to take negative value during the iterative process. However, we remark that this step \emph{does not} affect the optimality of the proposed solution since  we can change the sign of both $\beta_i^{u}$ and $\theta_{i}^{u}$, $u\in{t,r}$ and still achieve the same objective. As a result, we can project $\beta_{i}^{t}$ and $\beta_{i}^{r}$ onto the entire unit circle, and thus  we can write $P_{ \mathcal{B} }(\betv)$  as
	\begin{subequations}
		\begin{align}
			\left[\ensuremath{P_{\mathcal{B}}(}\boldsymbol{\beta})\right]_{i} & =\frac{{\beta}_{i}}{\sqrt{{\beta}_{i}^{2}+{\beta}_{i+N}^{2}}},i=1,2,\ldots,N\\
			\left[\ensuremath{P_{\mathcal{B}}(}\boldsymbol{\beta})\right]_{i+N} & =\frac{{\beta}_{i+N}}{\sqrt{{\beta}_{i}^{2}+{\beta}_{i+N}^{2}}}, i=1,2,\ldots,N.
		\end{align}
	\end{subequations}
	\subsubsection*{Complexity Analysis of Algorithm \ref{Algoa1}}
	We now provide the complexity analysis for each iteration of Algorithm \ref{Algoa1}  in terms of the required number of complex multiplications using big-O notation, which is in particular relevant for large $M$ and $N$ as considered in this paper. We note Algorithm \ref{Algoa1} only requires the first-order information, i.e., the objective and its gradient value. Let us analyze the complexity of computing the objective value. First, we need to compute $\bR_{k}$ which can be written as  $\bR_{k}=\bar{ \beta}_{k}\bR_{\mathrm{BS}}+ \hat{\beta}_{k}\tr(\bR_{\mathrm{RIS}} \bPhi_{w_{k}} \bR_{\mathrm{RIS}}  \bPhi_{w_{k}}^{\H})\bR_{\mathrm{BS}}=\bar{ \beta}_{k}\bR_{\mathrm{BS}}+ \hat{\beta}_{k}\tr(\bA_{w_k} \bPhi_{w_{k}}^{\H})\bR_{\mathrm{BS}}$. Now it is obvious that we need to obtain the term $\tr(\bA_{w_k} \bPhi_{w_{k}}^{\H})$. To this end, we note that since $\bPhi_{w_{k}}$ is diagonal, \emph{only diagonal elements of $\bA_{w_k} $, i.e., $\diag(\bA_{w_k})$, are required}. Next, computing $\bR_{\mathrm{RIS}} \bPhi_{w_{k}}$ requires $N^2$ complex multiplications since $\bPhi_{w_{k}}$ is diagonal, and thus, $\diag(\bA_{w_k})=\diag(\bR_{\mathrm{RIS}} \bPhi_{w_{k}} \bR_{\mathrm{RIS}})$ requires $O(N^2)$ complex multiplications. As a result, the complexity to compute  $\tr(\bA_{w_k} \bPhi_{w_{k}}^{\H})$ is $O(N^2+N)$, and thus, the complexity to obtain $\bR_{k}$ is $O(N^2+N+M^2)$ since $O(M^2)$ additional complexity multiplications are required to obtain $\tr(\bA_{w_k} \bPhi_{w_{k}}^{\H})\bR_{\mathrm{BS}}$. Recall that $\bPsi_{k}=\bR_{k}\bQ_{k}\bR_{k}=\bR_{k}\bigl(\bR_{k}+\frac{\sigma^2}{ \tau P }\Id_{M}\bigr)^{-1}\bR_{k}$ and thus it would take $O(M^3)$ to compute it, which is due to the calculation of the involving matrix inversion. Here we present a more efficient way to compute $\bPsi_{k}$. Let $\bR_{\mathrm{BS}}=\bU\bSigma\bU^{\H}$, where $\bSigma$ is diagonal and $\bU$ is unitary, be the eigenvalue decomposition (EVD) of $\bR_{\mathrm{BS}}$ and $\alpha_{k}=\hat{\beta}_{k}\tr(\bA_{w_k} \bPhi_{w_{k}}^{\H})$. We remark that \emph{the EVD of $\bSigma$ is only performed once} before Algorithm \ref{Algoa1} is executed. Then, we can write
	\begin{align}\label{eq:Qk}
		\bQ_{k}&=\bigl(\bR_{k}+\frac{\sigma^2}{ \tau P }\Id_{M}\bigr)^{-1}
		\nn\\
		&=\bigl(\alpha_{k}\bU\bSigma\bU^{\H}+\frac{\sigma^2}{ \tau P }\Id_{M}\bigr)^{-1}
		\nn\\
	&	=\bU\bigl(\alpha_{k}\bSigma+\frac{\sigma^2}{ \tau P }\Id_{M}\bigr)^{-1}\bU^{\H},
	\end{align} 
	where we have used the fact that $\bR_{k}=\alpha_{k}\bR_{\mathrm{BS}}=\alpha_{k}\bU\bSigma\bU^{\H}$.
	Substituting \eqref{eq:Qk} into \eqref{Psiexpress}, we immediately have
	\begin{align}
		\bPsi_{k}&=\alpha_{k}^2\bU\bSigma\bigl(\alpha_{k}\bSigma+\frac{\sigma^2}{ \tau P }\Id_{M}\bigr)^{-1}\bSigma\bU^{\H} 	\nn\\
		&=\bU\bSigma\bigl(\alpha_{k}^{-1}\bSigma+\frac{\sigma^2}{ \tau P \alpha_{k}^2}\Id_{M}\bigr)^{-1}\bSigma\bU^{\H}
			\nn\\
		& =\bU\bSigma\bar{\bSigma}_k\bSigma\bU^{\H},\end{align} 
	where $\bar{\bSigma}_k=\bigl(\alpha_{k}^{-1}\bSigma+\frac{\sigma^2}{ \tau P \alpha_{k}^2}\Id_{M}\bigr)^{-1}$. Note that $\bar{\bSigma}_k$ is diagonal and takes $O(M)$ complex multiplications to compute. Now it is clear that $\tr(\bPsi_{k})=\tr(\bSigma\bar{\bSigma}_k\bSigma)$ requires $O(M)$ complex multiplications  to compute, which is indeed the complexity to compute $S_k$ in \eqref{Num1}. To compute $I_k$ we have
	\begin{align}
\!\!\!\!\!		\bPsi&=\sum\nolimits_{i=1}^{K}\!\!\bPsi_{i}
	\nn\\
&=\bU\bSigma\Bigl(\sum\nolimits_{i=1}^{K}\!\!\bar{\bSigma_i}\Bigr)\bSigma\bU^{\H}	\nn\\
&=\bU\bSigma\bar{\bSigma}\bSigma\bU^{\H},
	\end{align}where $\bar{\bSigma}=\sum\nolimits_{i=1}^{K}\bar{\bSigma_i}$. We note that obtaining $\bar{\bSigma}$ once all $\bar{\bSigma_i}$'s are known requires only $KM$ complex additions, which is negligible.
	Thus, it follows that 
	\begin{align}
		\sum\nolimits_{i=1}^{K}\tr(\bR_{k}\bPsi_{i}) 	&= \tr(\bR_{k}\bPsi)	\nn\\
		&=\alpha_{k}\tr(\bU\bSigma^2\bar{\bSigma}\bSigma\bU^{\H})	\nn\\
		&=\alpha_{k}\tr(\bSigma^2\bar{\bSigma}\bSigma),
	\end{align}
	and that
	\begin{equation}
		\sum\nolimits_{i=1}^{K}\tr(\bPsi_{i}) = \tr(\bPsi) = \tr(\bSigma\bar{\bSigma}\bSigma).
	\end{equation}
	Summarizing the above results, we can conclude that the complexity to compute $f(\thetv,\betv)$ is $O(K(N^2+M^2))$.
	
	Next we present the complexity to compute $\nabla_{\thetv}f(\thetv,\betv)$ and $\nabla_{\betv}f(\thetv,\betv)$. Recall that  $v_{k}$ in \eqref{derivtheta_t} and \eqref{derivtheta_r} is given by $v_{k}=2\hat{\beta}_{k}\tr\left(\bPsi_{k}\right)\tr((\bQ_{k}\bR_{k}+\bR_{k}\bQ_{k}-\bQ_{k}\bR_{k}\bQ_{k})\bR_{\mathrm{BS}})$. Following the above analysis, we can write $\bQ_{k}\bR_{k}\bR_{\mathrm{BS}}$ as
	$\bQ_{k}\bR_{k}\bR_{\mathrm{BS}}=\bR_{k}\bQ_{k}=\alpha_{k}\bU(\alpha_{k}\bSigma+ \frac{\sigma^2}{ \tau P \alpha_{k}^2}\Id_{M})^{-1}\bSigma^{2}\bU^{\H}$ and $\bQ_{k}\bR_{k}\bQ_{k}\bR_{\mathrm{BS}}$ as $\bQ_{k}\bR_{k}\bQ_{k})\bR_{\mathrm{BS}}=\alpha_{k}\bU(\alpha_{k}\bSigma+ \frac{\sigma^2}{ \tau P \alpha_{k}^2}\Id_{M})^{-2}\bSigma^{2}\bU^{\H}$. Thus, we can rewrite $v_{k}$ equivalently as
	\begin{align}
	&	v_{k} = 2\hat{\beta}_{k}\alpha_{k}\tr\left(\bPsi_{k}\right)
			\nn\\
		&\times\Bigl(2\tr\bigl(\bigl(\alpha_{k}\bSigma+ \frac{\sigma^2}{ \tau P \alpha_{k}^2}\Id_{M}\bigr)^{-1}\bSigma^{2}\bigr)\nn\\
		&-\tr\bigl(\bigl(\alpha_{k}\bSigma+ \frac{\sigma^2}{ \tau P \alpha_{k}^2}\Id_{M}\bigr)^{-2}\bSigma^{2}\bigr)\Bigr),
	\end{align} which requires  $O(M)$ complex multiplications to obtain since $\tr\left(\bPsi_{k}\right)$ is already computed and the involving matrices in the second term of the above equation are diagonal.
	
	Next, to obtain $\nabla_{\thetv^{t}}S_{k}$ in \eqref{derivtheta_t}, we further need to calculate the diagonal elements of $\mathbf{A}_{t}\diag(\boldsymbol{{\beta}}^{t})$, which can be obtained by multiplying each diagonal element of $\mathbf{A}_{t}$ with the corresponding entry of $\boldsymbol{{\beta}}^{t}$, i.e., $\diag\bigl(\mathbf{A}_{t}\diag(\boldsymbol{{\beta}}^{t})\bigr)=\diag\bigl(\mathbf{A}_{t}\bigr)\odot \betv^{t}$,  where $\odot$ represents the entry-wise multiplication. We remark that the term $\diag\bigl(\mathbf{A}_{t}\bigr)$ is already computed when calculating $\bR_{k}$, and  thus, the complexity to compute $\nabla_{\thetv^{t}}S_{k}$ is  $O(N)$. {\color{black}Apparently, the same complexity is  required to obtain $\nabla_{\thetv^{r}}S_{k}$ in \eqref{derivtheta_r}.} The complexity of calculating $\nabla_{\thetv^{u}}I_{k}$, $u\in\{t,r\}$ follows  similar lines. More specifically, both $\bar{\nu}_{k}$ and $\sum\nolimits _{i\in\mathcal{K}_{u}}^{K}\tilde{\nu}_{ki}$ require $O(M^2)$ to compute. In summary, the complexity of computing  the gradients for each iteration is $O(K(N^2+M^2))$. 
	{\color{black}
		\subsubsection*{Convergence Analysis of Algorithm \ref{Algoa1}}
		The convergence of Algorithm \ref{Algoa1} is guaranteed by following standard arguments for projected gradient methods. First, the gradients $\nabla_{\thetv}f(\thetv,\betv)$ and $\nabla_{\betv}f(\thetv,\betv)$ are Lipschitz continuous\footnote{\color{black}A function $\bh(\bx)   $ is said to be Lipschitz continuous over the set $D$ if there exists $L>0$ such that $||\bh(\bx)-\bh(\by)  ||\leq L||\bx-\by||_2$} over the feasible set as they comprise basic functions as given above. Let $L_{\thetv }$ and $L_{\betv}$ be the Lipschitz constant of $\nabla_{\thetv}f(\thetv,\betv)$ and $\nabla_{\betv}f(\thetv,\betv)$, respectively. Then it holds that \cite[Chapter 2]{Bertsekas1999}
		\begin{align}
			f(\bx,\by) &\geq f(\thetv,\betv)
			+\langle	\nabla_{\thetv}f(\thetv,\betv),\bx-\thetv\rangle-\frac{1}{L_{\thetv }}\|\bx-\thetv\|^{2}_{2}
			\nn\\
			&\quad\quad+\langle\nabla_{\betv}f(\thetv,\betv),\by-\betv\rangle-\frac{1}{L_{\betv}}\|\by-\betv\|^{2}_{2}\nn\\
			&\geq f(\thetv,\betv)
			+\langle	\nabla_{\thetv}f(\thetv,\betv),\bx-\thetv\rangle-\frac{1}{L_{\max }}\|\bx-\thetv\|^{2}_{2}\nn
			\\ \nn&\quad\quad+\langle\nabla_{\betv}f(\thetv,\betv),\by-\betv\rangle-\frac{1}{L_{\max}}\|\by-\betv\|^{2}_{2}		
		\end{align}
		where $L_{\max}=\max(L_{\thetv },L_{\betv})$. Thus, the line search procedure of Algorithm \ref{Algoa1} (i.e. the loop between Steps \ref{ls:start} -- \ref{ls:end}) terminates in finite iterations since the condition in Step \ref{ls:end} must be satisfied when  $\mu_n <L_{\max}$. More specifically, given $\mu_{n-1}$, the maximum number of steps in the line search procedure is $\left\lceil \frac{\log(L_{\max}\mu_{n-1})}{\log\kappa}\right\rceil $, where $\log()$ denotes the natural logarithm and 
		$\left\lceil \cdot \right\rceil$ 
		denotes the smallest integer that is larger than or equal to the argument. Also, due to the line search we automatically have  an increasing sequence of objectives, i.e., $f(\thetv^{n+1},\betv^{n+1})\geq f(\thetv^{n},\betv^{n})$. Since the feasible sets $\Theta$  and $\mathcal{B}$ are compact, $f(\thetv^{n},\betv^{n})$ must converge. However, we remark that Algorithm \ref{Algoa1} is only guaranteed to converge to a stationary point of \eqref{Maximization}, which is not necessarily an optimal solution due to the nonconvexity of \eqref{Maximization}. We also note that $L_{\thetv }$ and $L_{\betv}$ are not required to run Algorithm \ref{Algoa1}.
		\begin{remark}
			A  question naturally arising is why we have optimized the amplitude, $[\betv_{u}]_i$, and phase shift, $[\thetv_{u}]_i$, $u\in \{t,r\}$ separately, rather than optimizing them as a single complex, e.g, $[\varv_{u}]_i=[\betv_{u}]_ie^{j[\thetv_{u}]_i}$. \textcolor{black}{The latter would certainly make the presentation of the proposed method more elegant}. However, interestingly enough, we find by extensive numerical experiments that both ways give the same performance in many cases. {\color{black}We note that this does not mean the amplitudes corresponding to the reflection or transmission mode of  most of the STAR-IRS elements are  close to 1.  There is indeed  a significant gap between the ES and MS mode as shown in the next section.} However, in some cases, using two separate variables yields a better performance. This numerical observation has led to the current presentation of the proposed method where  amplitudes and phase shifts are optimized separately.
		\end{remark}
	}
\subsection{Optimization of \textcolor{black}{Amplitudes} and Phase Shifts for MS protocol}
In the case of the MS scheme, the values of amplitude are forced to be binary, i.e.,  $\betv^{t}_n\in \{ 0, 1\} $ and $\betv^{r}_n\in \{ 0, 1\} $. Thus, the optimization problem for the MS protocol is stated as
\begin{equation}
	\begin{IEEEeqnarraybox}[][c]{rl}
		\max_{\thetv,\betv}&\quad f(\thetv,\betv)\\
		\mathrm{s.t}&\quad \beta_{n}^{t}+\beta_{n}^{r}=1,  \forall n \in \mathcal{N}\\
		&\quad\beta_{n}^{t}\in \{ 0, 1\}, \beta_{n}^{r}\in \{ 0, 1\},~ \forall n \in \mathcal{N}\\
		&\quad |\theta_{n}^{t}|=|\theta_{n}^{r}|=1, ~ \forall n \in \mathcal{N}.
	\end{IEEEeqnarraybox}\label{MSMaximization}\tag{$\mathcal{P}2$}
\end{equation}

The binary constraints on $\betv^{t}_n$ and $\betv^{r}_n$ in \eqref{MSMaximization} make it far more difficult to solve. In fact, \eqref{MSMaximization} belongs to the class of binary nonconvex programming, which is generally NP-hard. For this type of problems, a pragmatic approach is to find a high-performing solution. To this end, we find that  the simple solution obtained by rounding off the solution obtained by solving \eqref{Maximization} to the nearest binary value can produce a reasonably good performance. This shall be numerically demonstrated in the next section. More advanced methods for solving \eqref{MSMaximization} are thus left for future work.
\section{Numerical Results}\label{Numerical}
\textcolor{black}{In this section, we present numerical results for the sum SE in STAR-RIS-aided systems, using both analytical techniques and Monte Carlo simulations. Specifically, our analytical results for the sum SE are derived from equations \eqref{gammaSINR}-\eqref{Den1}, while for Monte Carlo simulations, we perform 1000 independent channel realizations to evaluate the expressions in equations \eqref{gamma1}-\eqref{Int}. This is to verify  the tightness of the approximation stated in Proposition \ref{Proposition:DLSINR} and the derivations in Appendix \ref{Proposition1}. The results shown in Figs.~\ref{fig2} and \ref{fig4} clearly demonstrate a close match between the analytical results and MC simulations, and thus, confirming that Proposition \ref{Proposition:DLSINR} indeed presents a very tight approximation of the SINR.} 

The simulation setup includes a STAR-RIS with a UPA of $ N=64 $ elements assisting the communication between a uniform linear array (ULA) of $ M =64$ antennas  at the BS that serves $ K = 4 $ UEs. The $xy-$coordinates of the BS and RIS are given as $(x_B,~ y_B) = (0,~0)$ and $(x_R,~ y_R)=(50,~ 10)$, respectively, all in meter units.  In addition, users in $r$ region are located on a straight line between $(x_R-\frac{1}{2}d_0,~y_R-\frac{1}{2}d_0)$ and $(x_R+\frac{1}{2}d_o,~y_R-\frac{1}{2}d_0)$ with equal distances between each two adjacent users, and $d_0 = 20$~m in our simulations. Similarly, users in the $t$ region are located between $(x_R-\frac{1}{2}d_0,~y_R+\frac{1}{2}d_0)$ and $(x_R+\frac{1}{2}d_o,~y_R+\frac{1}{2}d_0)$. The size of each RIS element is $ d_{\mathrm{H}}\!=\!d_{\mathrm{V}}\!=\!\lambda/4 $.   Distance-based path-loss is considered in our work, such that the channel gain of a given link $j$ is $\tilde \beta_j = A d_j^{-\alpha_{j}}$, where $A$ is the area of each reflecting element at the RIS, and $\alpha_{j}$ is the path-loss exponent. Regarding $ \tilde{ \beta}_{g} $,  we assume the same values as
for $ \tilde \beta_j $.  Similar values are assumed for $ \bar{ \beta}_{k} $   but we also consider an additional penetration loss equal to $ 15 $ dB. The correlation matrices $ \bR_{\mathrm{BS}}$ and $\bR_{\mathrm{RIS}} $ are  computed according to \cite{Hoydis2013} and \cite{Bjoernson2020}, respectively.  Also, $ \sigma^2=-174+10\log_{10}B_{\mathrm{c}}  $ \textcolor{black}{ in dBm}, where $B_{\mathrm{c}}=200~\mathrm{kHz}$ is the bandwidth.


\textcolor{black}{As a baseline scheme, we consider the  RIS, which consists of transmitting-only or reflecting-only elements, each with $ N_{t} $ and $ N_{r} $ elements, such that $ N_{t}+N_{r} =N$.} Notably, this scheme resembles  the MS protocol, where the first $ N_{t} $ elements operate in transmission mode and the $ N_{r} $ elements operate in reflection mode. \textcolor{black}{Also, we have applied an ON/OFF scheme for  channel estimation by following the idea in \cite{Mishra2019}, the direct links are estimated with all sub-surfaces turned off and the cascaded links are estimated with one  element turned on at transmission/reflection mode sequentially.}
\begin{figure}[!h]
	\begin{center}
		\includegraphics[width=0.9\linewidth]{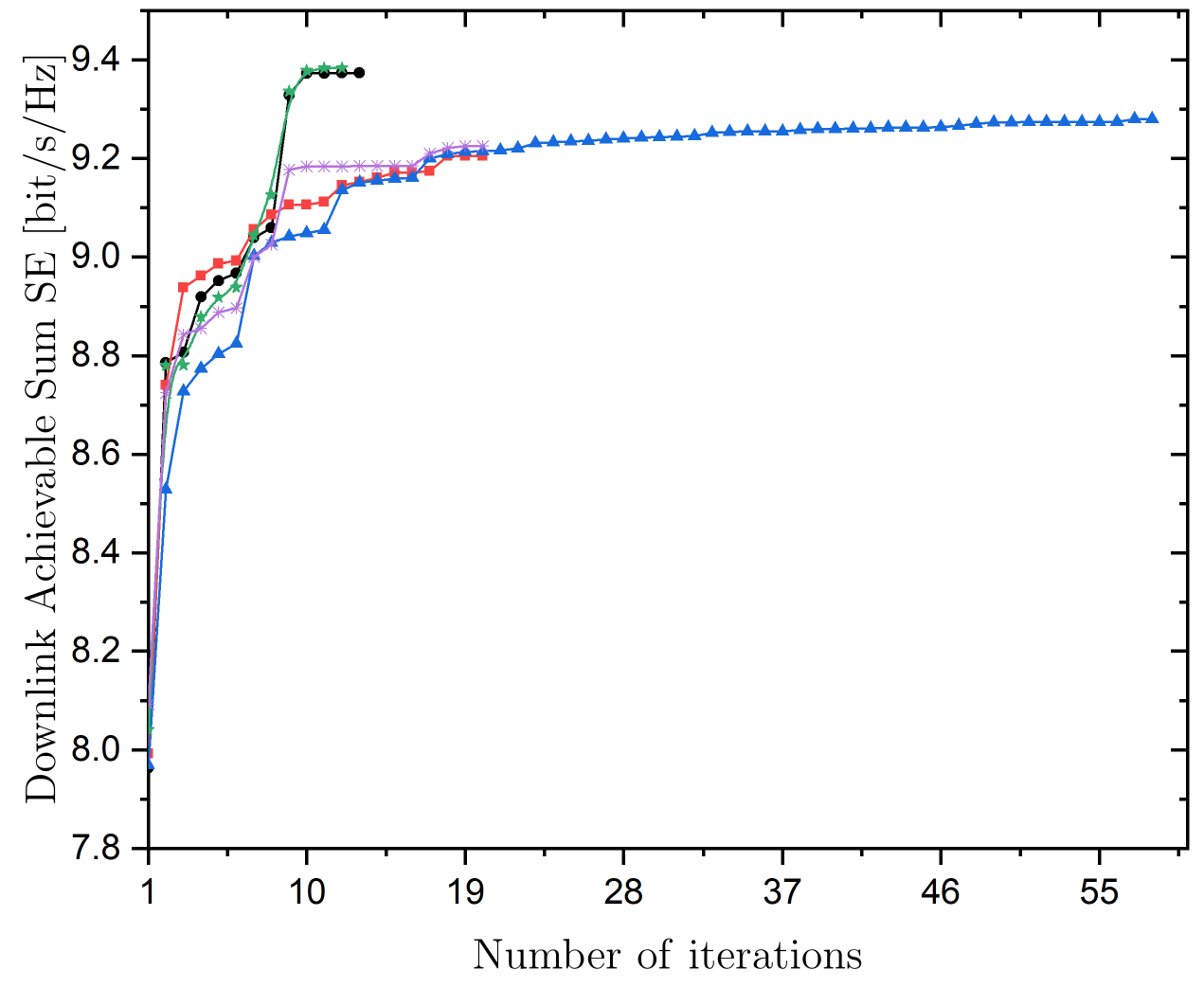}
		\caption{{Convergence of Algorithm \ref{Algoa1} for an STAR-RIS assisted MIMO system with imperfect CSI ($M=64$, $ N=64 $, $ K=4 $) \textcolor{black}{for five different initial points.} }}
		\label{fig:convergence}
	\end{center} 
\end{figure}
\begin{figure}[!h]
	\begin{center}
		\includegraphics[width=0.9\linewidth]{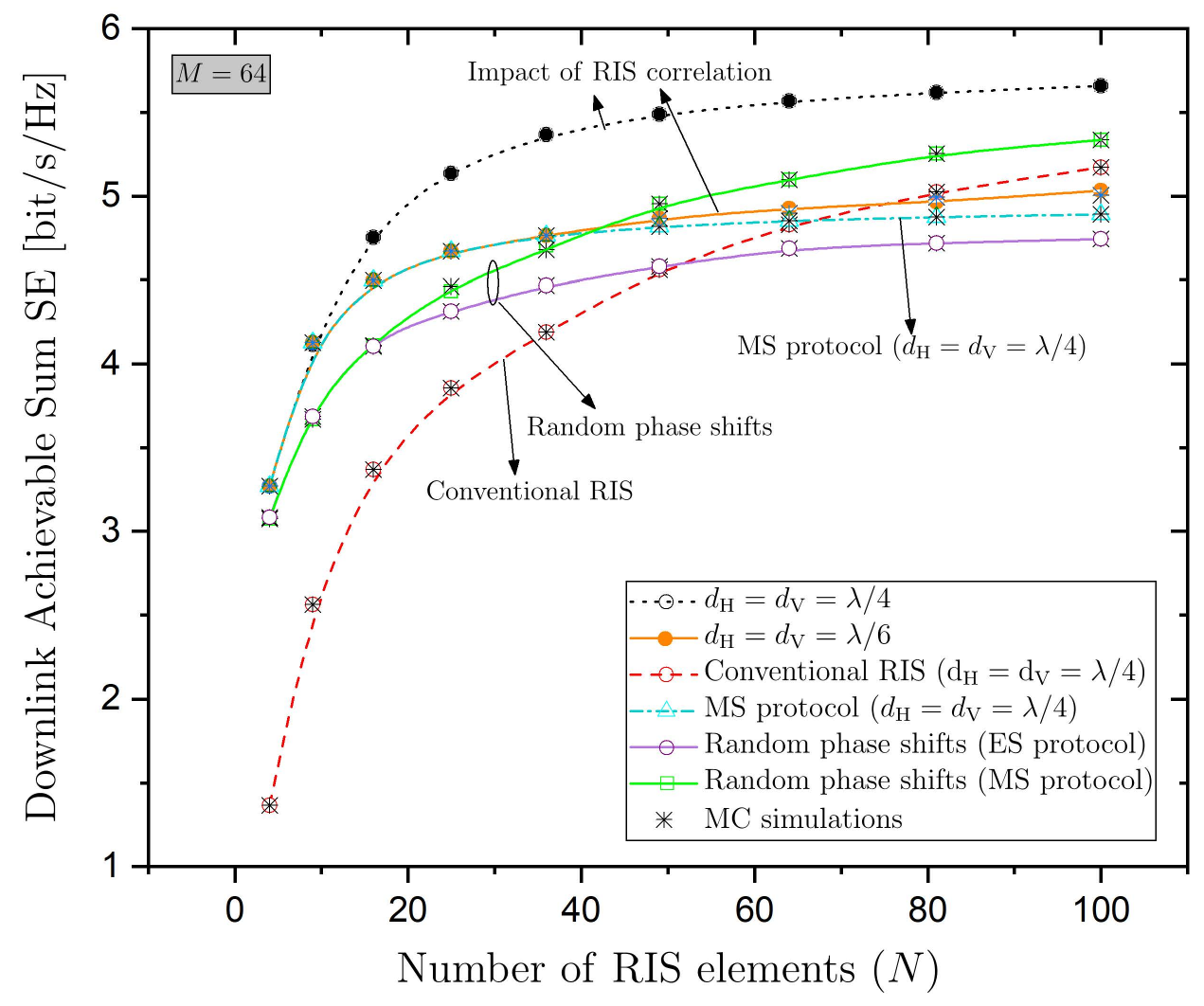}
		\caption{{Downlink achievable sum SE versus the number of RIS elements antennas $N$ of a STAR-RIS assisted MIMO system with imperfect CSI ($ N=64 $, $ K=4 $) for varying conditions (Analytical results and MC simulations). }}
		\label{fig2}
	\end{center}
\end{figure}

\begin{figure}%
	\centering
	\subfigure[]{	\includegraphics[width=0.9\linewidth]{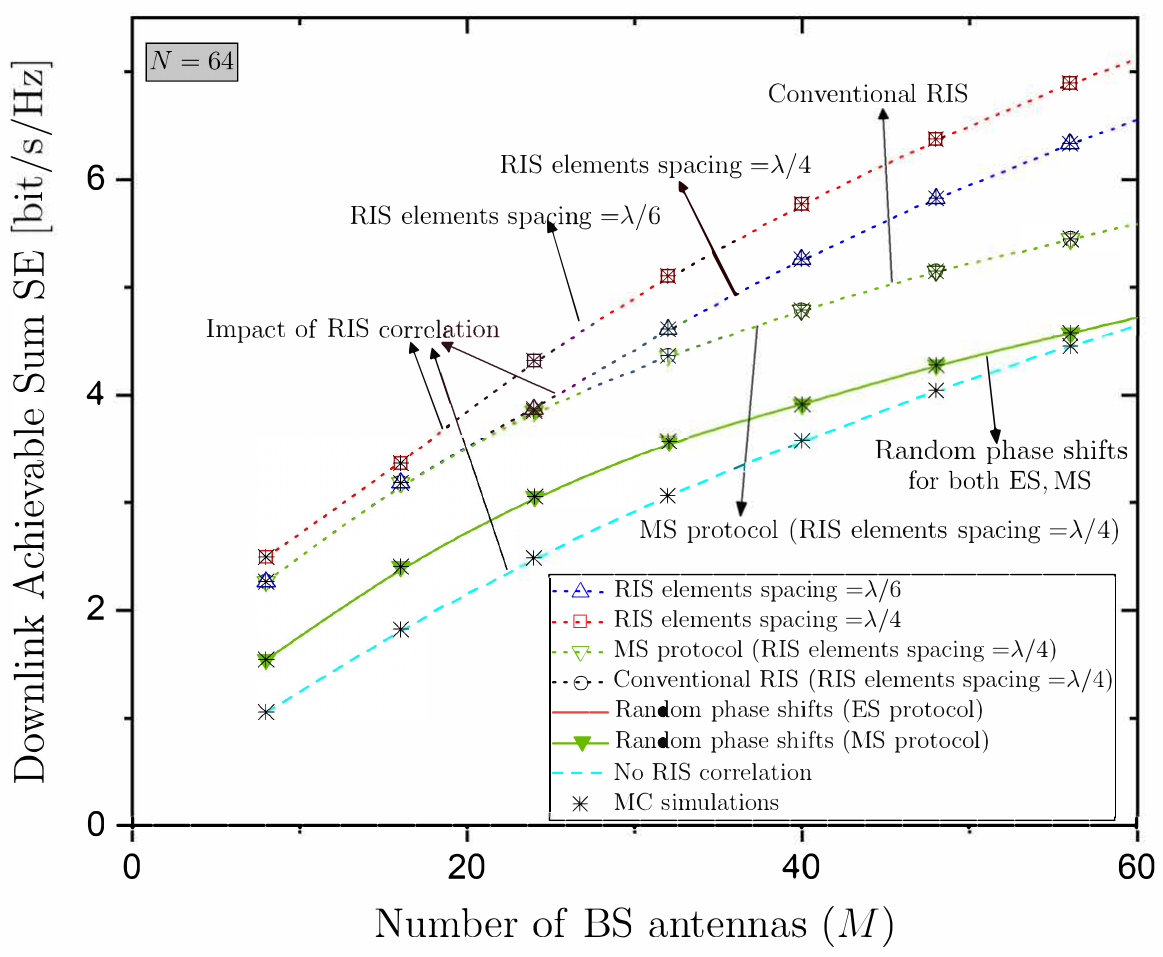}}\qquad
	\subfigure[]{	\includegraphics[width=0.9\linewidth]{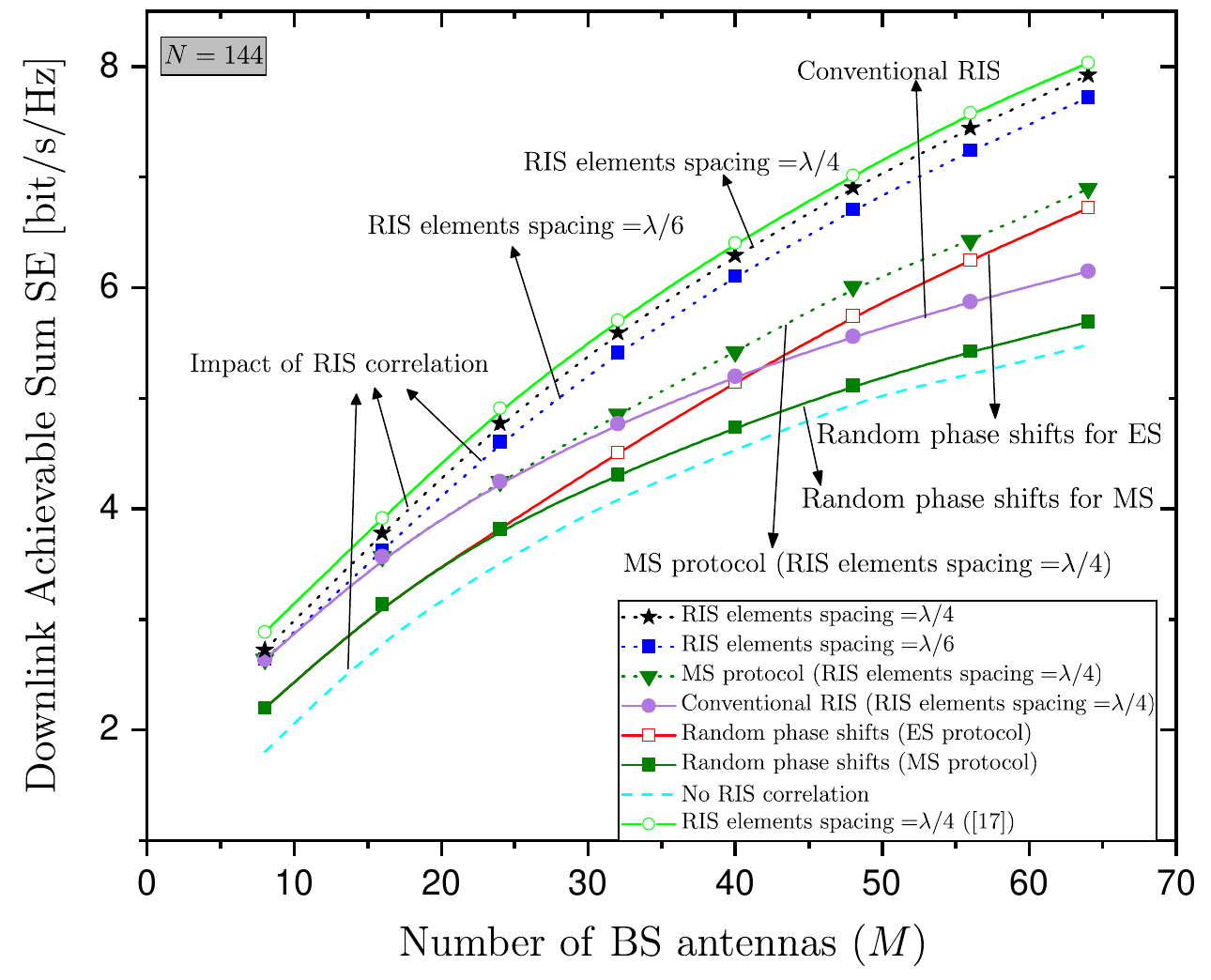}}\\
	\caption{ Downlink achievable sum SE versus the number of BS antennas $M$ of a STAR-RIS assisted MIMO system with imperfect CSI for: (a) $ N=64 $, $ K=4 $, (b) $ N=144 $, $ K=4 $ under varying conditions (Analytical results). }
	\label{fig3}
\end{figure}

\begin{figure}[!h]
	\begin{center}
		\includegraphics[width=0.9\linewidth]{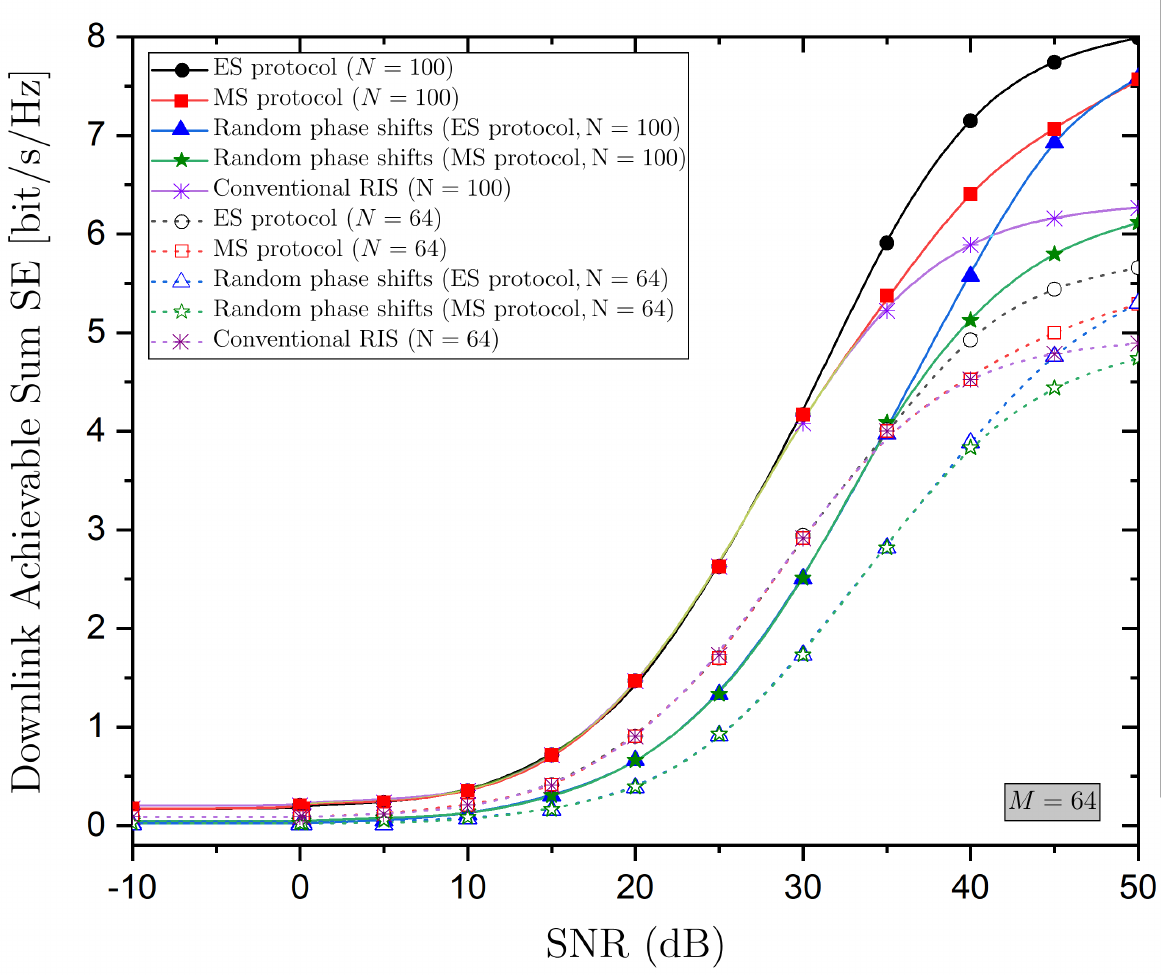}
		\caption{{Downlink achievable sum SE versus the SNR of a STAR-RIS assisted MIMO system with imperfect CSI ($M=64$, $ N=64 $, $ K=4 $) for varying conditions (Analytical results). }}
		\label{fig4}
	\end{center} 
\end{figure}
In the first numerical experiment, we demonstrate the convergence of the proposed projected gradient algorithm. Specifically, we plot the achievable sum SE against the iteration count returned by Algorithm \ref{Algoa1} from 5 different randomly generated initial points as shown in Fig.~\ref{fig:convergence}.  More specifically,  {\color{black}the   initial points for Algorithm \ref{Algoa1} are generated as follows. First, we set the amplitudes to   $[\betv^{(0)}_r]_n=[\betv^{(0)}_t]_n=\sqrt{0.5}$, for all $n\in\mathcal{N}$, i.e, equal power splitting between transmission and reception mode for all elements of the STAR-RIS. The initial values for the phase shifts are taken as  $[\thetv^{(0)}_{r}]_n=e^{j\phi_n^r}$ and  $[\thetv^{(0)}_{t}]_n=e^{j\phi_n^t}$, where $\phi_n^r$ and $\phi_n^t$ are independently drawn from the uniform distribution over $[0,2\pi]$}. We terminate Algorithm \ref{Algoa1} when the  increase of the objective between two last iterations is less than $10^{-5}$ or the number of iterations is larger than $200$. Note that the considered problem in \eqref{Maximization} is nonconvex, and thus, the proposed  projected gradient algorithm can only guarantee a stationary solution that is not necessarily optimal. As a result, Algorithm \ref{Algoa1} may converge to different points starting from different initial points, which is clearly seen in Fig.~\ref{fig:convergence}. Moreover, we can see that different initial points may lead to different convergence rates. Thus, to mitigate this performance sensitivity of Algorithm  \ref{Algoa1} on the initial points, we need to run it from different initial points and take the best convergent solutions. Through our extensive simulations, it is best to run Algorithm  \ref{Algoa1} from 5 randomly generated initial points to achieve a good trade-off between complexity and  obtained sum SE.

Fig. \ref{fig2} shows the achievable sum SE versus the number of  STAR-RIS elements $ N $ while varying the effect of spatial correlation in terms of the size of each RIS element. First, as can be seen, the downlink sum SE increases with $ N $ as expected. Next, by focusing on the impact of spatial correlation at the STAR-RIS, we show that the performance decreases as  the correlation increases. In particular, the sum SE decreases with increased correlation as the inter-element distance of the STAR-RIS decreases.   Moreover, the MS protocol   achieves a lower performance because it is a special case of the ES protocol. Especially, for a low number of RIS elements  the curves coincide, while as $N$ increases, an increasing gap appears.  
Furthermore, for the sake of comparison, we provide the performance of conventional RIS with reflection-only operation but this also appears lower performance since fewer degrees of freedom for just reflection can be exploited. We have also depicted the performance in the case of blocked direct signal. Obviously, the STAR-RIS contributes to the performance since the line corresponding the case with no direct signal is lower, which means that the performance is worse.

Figs. \ref{fig3}(a) and \ref{fig3}(b) illustrate the achievable sum SE versus the number of BS antennas $ M $ while shedding light on various effects. Obviously, the sum SE  increases with $ M $. \textcolor{black}{In particular, regarding the RIS correlation, in Fig. \ref{fig3}(a), we observe that   an increased correlation by reducing the distance among the RIS elements degrades the performance  due to reduced diversity gains among the RIS elements. In the case of no RIS correlation, represented by the dashed cyan line, the performance is quite low due to the absence of capability for phase shift optimization as mentioned in Remark 1.} Moreover, the ES protocol achieves better performance but with higher complexity compared to the MS protocol. The performance increases with more BS antennas. Also, in the case of random phase shifts, the sum SE is lower. Notably, for $N=64$ elements, the two lines corresponding to the ES and MS protocols coincide but, according to Fig.\ref{fig3}(b), which assumes $N=144$ elements, a gap between the lines appears. The gap increases with increasing $M$.  Similar to the previous figure, we have included the baseline scenario with reflecting-only capabilities having the  half elements $ N=20 $, and we witness the superiority of STAR-RIS. Moreover, a comparison of the two figures, reveals that an increase in RIS elements spacing has a greater impact on a lower number of RIS elements, i.e., $N=64$. Furthermore, Fig. \ref{fig3}(b) shows that the scenario of no RIS correlation   performs worse than the cases with random phases when $M$ is large. Also, in this figure, it is shown that the achievable rate is higher than Fig~\ref{fig3}(b). \textcolor{black}{In addition, in Fig. \ref{fig3}(b), we have added a line corresponding to channel estimation based on the ON/OFF scheme in \cite{Mishra2019}. We observe that the achievable rate is higher in this case because in the case of statistical CSI we have loss of information. In other words, we observe a trade-off between a lower overhead of the proposed approach and a higher rate in the case of estimating  individual channels.}

Fig. \ref{fig4} depicts the achievable sum SE versus the SNR under similar conditions, i.e., in the cases of $N=100$ (solid lines) and $N=64$ (dotted lines). As expected, when $N=100$, the performance is better since a higher SE is achieved. In each case, for low SNR, the ES and MS protocols coincide, while for high SNR, an increasing gap is observed. In the case $N=100$, conventional RIS and random MS protocols exhibit the same performance at low SNR, but a gap appears as the SNR increases. The behavior at low SNR is similar, however, the corresponding gaps are smaller. \textcolor{black}{The reasons for these observations can be explained as follows. At low SNR, it is more beneficial to focus on users in the reflection region as they are closer to the BS. This is confirmed by the fact that, after running the proposed algorithm, $\beta_{n}^{r}\approx 1,\forall n\in \mathcal{N}$. As a result, the performances of the ES and MS protocols, as well as the conventional RIS, are nearly the same. However, as SNR increases, the increase in the sum SE becomes minimal if we continue to focus on users in the reflection region. Thus, at high SNR, directing some power to users in the transmission region can improve the total SE. This leads to performance differences between the ES and MS protocols, as well as the conventional RIS.}

\section{Conclusion} \label{Conclusion} 
This paper presented a study of the achievable rate of STAR-RIS assisted mMIMO systems while accounting for imperfect CSI and correlated Rayleigh fading. Notably, we considered several UEs, each of which can lie on either side of the RIS, and we derived the achievable rate in closed-form. Also, we provided a low-complexity iterative optimization approach to maximizing the achievable rate, in which the amplitudes and the phase shifts of the RIS are updated simultaneously at each iteration. Furthermore, we provided useful insights into the impact of RIS correlation and showed that STAR-RIS is more beneficial compared to the traditional RIS which is reflecting only.

\begin{appendices}
	\section{Proof of Lemma~\ref{PropositionDirectChannel}}\label{lem1}
	The LMMSE estimator of $ \bh_{k} $, obtained by minimizing $ \tr\!\big(\EE\big\{\!(\hat{\bh}_{k}-{\bh}_{k})(\hat{\bh}_{k}-{\bh}_{k})^{\H}\!\big\}\!\big) $, is given  by
	\begin{align}
		\hat{\bh}_{k} =\EE\!\left\{\br_{k}\bh_{k}^{\H}\right\}\left(\EE\!\left\{\br_{k}\br_{k}^{\H}\right\}\right)^{-1}\br_{k}.\label{Cor6}
	\end{align}
	Given that the channel and the receiver noise are uncorrelated, we obtain
	\begin{align}
		\EE\left\{\br_{k}\bh_{k}^{\H}\right\}
		&=\EE\left\{\bh_{k}\bh_{k}^{\H}\right\}=\bR_{k}.\label{Cor0}
	\end{align}
	The second term in \eqref{Cor6} is written as
	\begin{align}
		\EE\left\{\br_{k}\br_{k}^{\H}\right\}&=\bR_{k} +\frac{\sigma^2}{ \tau P }\Id_{M}.\label{Cor1}
	\end{align}
	The LMMSE estimate in \eqref{estim1} is obtained by inserting \eqref{Cor0} and \eqref{Cor1} into \eqref{Cor6}, which completes the proof. We further note that  the covariance matrix of the estimated channel is
	\begin{align}
		\EE\left\{\hat{\bh}_{k}	\hat{\bh}_{k}^{\H}\right\}=\bR_{k}\bQ_{k}\bR_{k}.\label{var1}
	\end{align} 
	
	\section{Proof of Proposition~\ref{Proposition:DLSINR}}\label{Proposition1}
	
	Recalling the property  $\bx^{\H}\by = \tr(\by \bx^{\H})$ for any vectors $\bx$, $\by$, we can further rewrite $ {{S}}_k $ in \eqref{Sig} as
	\begin{align}
		{{S}}_{k}&=|\EE\{\bh_{k}^{\H}\hat{\bh}_{k}\}|^{2}=|\tr\big( \EE\{\hat{\bh}_{k} {\bh}_{k}^{\H} \} \!\big)|^{2} \\
		&=|\tr\left( \EE\left\{\bR_{k}\bQ_{k} \br_{k}{\bh}_{k}^{\H}\right\} \right)\!|^{2}\label{term1}\\
		&=|\tr\left(\bPsi_{k}\right)\!|^{2}\label{term2},
	\end{align}
	where, in \eqref{term1}, we have substituted \eqref{estim1}. The last equation is obtained after computing the expectation between $ \br_{k} $ and $ \bh_{k} $.

	Next, the first term of $ I_k $ in \eqref{Int} is written as
	\begin{align}
		&\!\!\EE\big\{ \big| {\bh}_{k}^{\H}\hat{\bh}_{k}-\EE\big\{
		{\bh}_{k}^{\H}\hat{\bh}_{k}\big\}\big|^{2}\big\}\!=\!
		\EE\big\{ \big| {\bh}_{k}^{\H}\hat{\bh}_{k}\big|^{2}\big\}\!-\!\big|\EE\big\{
		{\bh}_{k}^{\H}\hat{\bh}_{k}\big\}\big|^{2} \label{est2}\\
		&=\EE\big\{ \big| \hat{\bh}_{k}^{\H} \hat{\bh}_{k} +\tilde{\bh}_{k}^{\H}\hat{\bh}_{k}\big|^{2}\big\}-\big|\EE\big\{
		\hat{\bh}_{k}^{\H}\hat{\bh}_{k}\big\}\big|^{2}\label{est3} \\
		&{\color{black}=\EE\big\{ \big|\tilde{\bh}_{k}^{\H}\hat{\bh}_{k}\big|^{2}\big\}+2\Re\{\EE\{\hat{\bh}_{k}^{\H}\hat{\bh}_{k}\hat{\bh}_{k}^{\H}\tilde{\bh}_{k}\}\}\label{est31}} 
	\end{align}
	where in~\eqref{est3}, we have used (9) in the main text. \textcolor{black}{	In~\eqref{est31}, to simplify the second term, we resort to the well-known  channel hardening property in massive MIMO which intuitively states that channels behave as deterministic. The same property is also applied to the estimated channels, which means $\hat{\bh}_{k}^{\H}\hat{\bh}_{k} \approx\EE\{\hat{\bh}_{k}^{\H}\hat{\bh}_{k}\} $ with high accuracy \cite{Bjoernson2017}. Using this property, we have
		\begin{align}
			\EE\{\hat{\bh}_{k}^{\H}\hat{\bh}_{k}\hat{\bh}_{k}^{\H}\tilde{\bh}_{k}\}\approx\EE\{\hat{\bh}_{k}^{\H}\hat{\bh}_{k}\}\EE\{\hat{\bh}_{k}^{\H}\tilde{\bh}_{k}\}=0
		\end{align}
		and thus
		\begin{align}
			&\!\!\EE\big\{ \big| {\bh}_{k}^{\H}\hat{\bh}_{k}-\EE\big\{
			{\bh}_{k}^{\H}\hat{\bh}_{k}\big\}\big|^{2}\big\}\!\approx \!\EE\big\{|\tilde{\bh}_{k}^{\H}\hat{\bh}_{k}|^{2}\big\} \label{est5}\\
			&=\tr\!\left( \bR_{k}\bPsi_{k}\right)-\tr\left( \bPsi_{k}^{2}\right).\label{est4}
		\end{align}
	}We note that \eqref{est4} holds  because $  \EE\{|\tilde{\bh}_{k}^{\H}\hat{\bh}_{k}|^2\}=  \tr(\EE\{\tilde{\bh}_{k}\tilde{\bh}_{k}^{\H}\hat{\bh}_{k}\hat{\bh}_k^{\H}\}){\color{black}\approx} \tr(\EE\{\tilde{\bh}_k\tilde{\bh}_k^{\H}\}\EE\{\hat{\bh}_k\hat{\bh}_k^{\H}\})
	= \tr((\bR_k-\bPsi_k)\bPsi_k)
	$, {\color{black}  where we have applied the approximations $\hat{\bh}_{k}^{\H}\hat{\bh}_{k} \approx\EE\{\hat{\bh}_{k}^{\H}\hat{\bh}_{k}\} $ and $\tilde{\bh}_{k}^{\H}\tilde{\bh}_{k} \approx\EE\{\tilde{\bh}_{k}^{\H}\tilde{\bh}_{k}\} $, which is due to the channel hardening property in massive MIMO as explained above}.

	\textcolor{black}{	For the second term of $ I_k $ in \eqref{Int} it is easy to check that
		\begin{align}
			&\EE\big\{ \big| {\bh}_{k}^{\H}\hat{\bh}_{i}\big|^{2}\big\}=\EE\big\{ \big| (\hat{\bh}_{k}^{\H}+\tilde{\bh}_{k})\hat{\bh}_{i}\big|^{2}\big\}\\
			&=\EE\big\{ \big| \hat{\bh}_{k}^{\H}\hat{\bh}_{i}\big|^{2}\big\}+\EE\big\{ \big|\tilde{\bh}_{k}\hat{\bh}_{i}\big|^{2}\big\}+2\Re\{\EE\{\hat{\bh}_{k}^{\H}\hat{\bh}_{i}\hat{\bh}_{i}^{\H}\tilde{\bh}_{k}\}\}\label{52}\\
			&=		\tr\!\left(\bR_{k}\bPsi_{i} \right).\label{54}
		\end{align}
		We note that \eqref{54} is true because the third term in \eqref{52} is zero as can be shown below.
		Specifically, this term can be written as
		\begin{align}
		\!\!	2\Re\{\EE\{\hat{\bh}_{k}^{\H}\hat{\bh}_{i}\hat{\bh}_{i}^{\H}\tilde{\bh}_{k}\}\}&\!=\!2\Re\left\{\EE\{\tr\!\left(\!\left(\hat{\bh}_{i}\hat{\bh}_{i}^{\H}\right)\!\!\left(\tilde{\bh}_{k}\hat{\bh}_{k}^{\H}\right)\!\right)\!\right\}\\
			&\!=\!2\Re\left\{\tr\!\left(\EE\{\hat{\bh}_{i}\hat{\bh}_{i}^{\H}\}\EE\{\tilde{\bh}_{k}\hat{\bh}_{k}^{\H}\}\right)\!\right\}\label{521}\\
			&=0\label{522}
		\end{align}
		where, in \eqref{521}, we have accounted for the independence between $ \hat{\bh}_{k} $ and $ \hat{\bh}_{i} $, and, in \eqref{522}, we have considered that $ \tilde{\bh}_{k} $ and $ \hat{\bh}_{k} $ are uncorrelated.}
	
	The normalization parameter is written as
	\begin{align}
		\!\!\!	\lambda=\frac{1}{\sum_{i=1}^{K}\!\EE\{\mathbf{f}_{i}^{\H}\mathbf{f}_{i}\}}=\frac{1}{\sum_{i=1}^{K}\!\mathbb{E}\{\hat{\mathbf{h}}_{i}^{\H}\hat{\mathbf{h}}_{i}\}}=\frac{1}{\sum_{i=1}^{K}\!\tr(\boldsymbol{\Psi}_{i})}.\label{normalization}
	\end{align}
	{\color{black}	 Combining \eqref{est4},  \eqref{54}, and \eqref{normalization}, we can approximate $I_k$ as  $\tilde{I}_k$ in \eqref{Den1}, and thus complete the proof.}
	\section{Proof of Lemma~\ref{LemmaGradients}}\label{lem2}
	Let us first derive $\nabla_{\boldsymbol{\theta^{t}}}f(\thetv,\betv) $ the complex gradient of the achievable sum SE with respect to $ \boldsymbol{\theta}^{t\ast}$. From \eqref{LowerBound}, it is easy to see that 	
	\begin{equation}
		\nabla_{\thetv^{t}}f(\thetv,\betv)=c\sum_{k=1}^{K}\frac{{\color{black}\tilde{I}_{k}}\nabla_{\boldsymbol{\theta}^{t}}S_{k}-S_{k}\nabla_{\boldsymbol{\theta}^{t}}{\color{black}\tilde{I}_{k}}}{(1+\gamma_{k}){\color{black}\tilde{I}_{k}}^{2}},
	\end{equation}
	where $c=\frac{\tau_{c}-\tau}{\tau_{c}\log_{2}(e)}$. To compute $\nabla_{\thetv^{t}}S_k$ for a given user $k$, we immediately note that $\nabla_{\thetv^{t}}S_k = 0$ if $w_k=r$, i.e., if UE $k$ is in the reflection region. This is obvious from \eqref{Num1}, \eqref{Psiexpress} and \eqref{cov1}. Thus, we only need to find $\nabla_{\thetv^{t}}S_k$ when $w_k=t$. In such a case, we can explicitly write $\bR_{k}$
	\begin{align}
		\bR_{k}&=\bar{ \beta}_{k}\bR_{\mathrm{BS}}+\hat{\beta}_{k}\tr(\bR_{\mathrm{RIS}} \bPhi_{t} \bR_{\mathrm{RIS}}  \bPhi_{t}^{\H})\bR_{\mathrm{BS}}\nn\\ 
		&=\bar{ \beta}_{k}\bR_{\mathrm{BS}}+\hat{\beta}_{k}\tr(\bA_{t}\bPhi_{t}^{\H}  )\bR_{\mathrm{BS}}\label{cov1t}.
	\end{align}
	where $\hat{\beta}_{k}=\tilde{\beta}_{g}\tilde{\beta}_{k}$ and $\mathbf{A}_{t}=\mathbf{R}_{\mathrm{RIS}}\bPhi_{t}\mathbf{R}_{\mathrm{RIS}}$. When  $w_k=r$, we define $\mathbf{A}_{r}=\mathbf{R}_{\mathrm{RIS}}\bPhi_{r}\mathbf{R}_{\mathrm{RIS}}$.
	
	To calculate $\nabla_{\thetv^{t}}S_k$, we follow steps detailed in \cite[Chap. 3]{hjorungnes:2011}. First, let us denote $d(\cdot)$  the complex differential of the function in the argument.	
	Then, it holds that	\begin{align}
		d(S_{k})&=d\bigl(\tr(\boldsymbol{\Psi}_{k})^{2}\bigr)\nn\\
		&=2\tr(\boldsymbol{\Psi}_{k})d\tr(\boldsymbol{\Psi}_{k})\nn\\
		&=2\tr(\boldsymbol{\Psi}_{k})\tr(d\boldsymbol{\Psi}_{k}).\label{eq:dSk}
	\end{align}
	Next, we apply \cite[Eq. (3.35)]{hjorungnes:2011}, which gives
	\begin{align}
		&d(\boldsymbol{\Psi}_{k})=d(\mathbf{R}_{k}\mathbf{Q}_{k}\mathbf{R}_{k})\nn\\
		&=d(\mathbf{R}_{k})\mathbf{Q}_{k}\mathbf{R}_{k}+\mathbf{R}_{k}d(\mathbf{Q}_{k})\mathbf{R}_{k}+\mathbf{R}_{k}\mathbf{Q}_{k}d(\mathbf{R}_{k}).\label{eq:dPsik}
	\end{align}
	The differentials $d(\mathbf{R}_{k})$ and $d(\mathbf{Q}_{k})$ are derived as follows. First, from \eqref{cov1t} it is easy to check that 
	\begin{align}
		d(\mathbf{R}_{k})\label{dRk:general}  
		%
		& =\hat{\beta}_{k}\mathbf{R}_{\mathrm{BS}}\tr\bigl(\mathbf{A}_{t}\herm d(\bPhi_{t})+\mathbf{A}_{t}d\bigl(\bPhi_{t}\herm\bigr)\bigr),
	\end{align}

	Since $\bPhi_{t}$ is diagonal,  we can further write  $d(\mathbf{R}_{k})$ as
	\begin{align}
		d(\mathbf{R}_{k})&=\hat{\beta}_{k}\mathbf{R}_{\mathrm{BS}}\bigl(\bigl(\diag\bigl(\mathbf{A}_{t}\herm\diag(\boldsymbol{{\beta}}^{t})\bigr)\bigr)\trans d(\boldsymbol{\theta}^{t})\nn\\
		&+\bigl(\diag\bigl(\mathbf{A}_{t}\diag(\boldsymbol{{\beta}}^{t})\bigr)\bigr)\trans d(\boldsymbol{\theta}^{t\ast})\bigr).\label{eq:dRk}
	\end{align}
	Next, we use \cite[eqn. (3.40)]{hjorungnes:2011} to obtain
	\begin{align}
		&d(\mathbf{Q}_{k})  =d\bigl(\mathbf{R}_{k}+\frac{\sigma}{\tau P}\mathbf{I}_{M}\bigr)^{-1}\nn\\
		&=-\bigl(\mathbf{R}_{k}+\frac{\sigma^{2}}{\tau P}\mathbf{I}_{M}\bigr)^{-1}d\bigl(\mathbf{R}_{k}+\frac{\sigma^{2}}{\tau P}\mathbf{I}_{M}\bigr)\bigl(\mathbf{R}_{k}+\frac{\sigma}{\tau P}\mathbf{I}_{M}\bigr)^{-1}\nonumber \\
		& =-\mathbf{Q}_{k}d(\mathbf{R}_{k})\mathbf{Q}_{k}.\label{eq:dQk}
	\end{align}
	Combining (\ref{eq:dPsik}) and (\ref{eq:dQk})
	yields
	\begin{equation}
		d(\boldsymbol{\Psi}_{k})=d(\mathbf{R}_{k})\mathbf{Q}_{k}\mathbf{R}_{k}\!-\!\mathbf{R}_{k}\mathbf{Q}_{k}d(\mathbf{R}_{k})\mathbf{Q}_{k}\mathbf{R}_{k}+\mathbf{R}_{k}\mathbf{Q}_{k}d\mathbf{R}_{k}.\label{eq:dPsik-1}
	\end{equation}
	Thus, by inserting \eqref{eq:dPsik-1} and (\ref{eq:dRk}) into \eqref{eq:dSk}, we obtain
	\begin{subequations}\label{eq:dSk:final}	\begin{align}
			&d(S_{k})  =2\tr(\boldsymbol{\Psi}_{k})\bigl(\tr(\mathbf{Q}_{k}\mathbf{R}_{k}d(\mathbf{R}_{k}))+\tr(\mathbf{R}_{k}\mathbf{Q}_{k}d(\mathbf{R}_{k}))\nn\\
			&-\tr\bigl(\mathbf{Q}_{k}\mathbf{R}_{k}^{2}\mathbf{Q}_{k}d(\mathbf{R}_{k})\bigr)\bigr)\\
			& =\nu_{k}\bigl(\diag\bigl(\bigl(\mathbf{A}_{t}\herm\diag(\boldsymbol{\mathbf{\beta}}^{t})\bigr)\bigr)\trans d\boldsymbol{\theta}^t+\bigl(\diag\bigl(\mathbf{A}_{t}\diag(\boldsymbol{\mathbf{\beta}}^{t})\bigr)\bigr)\trans d\boldsymbol{\theta}^{t\ast}\bigr),
		\end{align}
	\end{subequations}	where 
	\begin{equation}
		\nu_{k}\!=\!2\hat{\beta}_{k}\tr(\boldsymbol{\Psi}_{k})\!\tr\bigl(\!\bigl(\mathbf{Q}_{k}\mathbf{R}_{k}+\mathbf{R}_{k}\mathbf{Q}_{k}-\mathbf{Q}_{k}\mathbf{R}_{k}^{2}\mathbf{Q}_{k}\bigr)\mathbf{R}_{\mathrm{BS}}\bigr).
	\end{equation}
	From (\ref{eq:dSk:final}), we can conclude that 
	\begin{align}
		\nabla_{\thetv^{t}}S_k&=\frac{\partial}{\partial{\thetv^{t\ast}}}S_{k}\nn\\
		&=\nu_{k}\diag\bigl(\mathbf{A}_{t}\diag(\boldsymbol{{\beta}}^{t})\bigr)
	\end{align}
	for $w_k=t$, which indeed proves \eqref{derivtheta_t}. Following the same procedure we can easily prove \eqref{derivtheta_r}. The details are skipped for the sake of brevity.
	
	Now we turn our attention to $\nabla_{\thetv^{t}}{\color{black}\tilde{I}_{k}}$.	To this end, from \eqref{Den1},
	it is straightforward to check that 
	\begin{align}
		d({\color{black}\tilde{I}_{k}})  &=\sum\nolimits _{i=1}^{K}\tr(d(\mathbf{R}_{k})\boldsymbol{\Psi}_{i})+\sum\nolimits _{i=1}^{K}\tr(\mathbf{R}_{k}d(\boldsymbol{\Psi}_{i}))\nn\\
		&\quad-2\tr\bigl(\boldsymbol{\Psi}_{k}d(\boldsymbol{\Psi}_{k})\bigr)+\frac{K\sigma^{2}}{\rho}\sum\nolimits _{i=1}^{K}\tr(d(\boldsymbol{\Psi}_{i})).
	\end{align}
	Note that $d(\boldsymbol{\Psi}_{i})=0$ if $w_{i}\neq t$ since $\boldsymbol{\Psi}_{i}$ is independent of $\thetv^{t}$ in this case. Thus, the
	above equation is reduced to 
	\begin{equation}
		d({\color{black}\tilde{I}_{k}})=\tr(\boldsymbol{\Psi}d(\mathbf{R}_{k}))-2\tr\bigl(\boldsymbol{\Psi}_{k}d(\boldsymbol{\Psi}_{k})\bigr)+\sum\nolimits _{i\in\mathcal{K}_{t}}\tr(\bar{\mathbf{R}}_{k}d(\boldsymbol{\Psi}_{i})),	\label{ik1}
	\end{equation}	where $\boldsymbol{\Psi}=\sum\nolimits _{i=1}^{K}\boldsymbol{\Psi}_{i}$
	and $\bar{\mathbf{R}}_{k}=\mathbf{R}_{k}+\frac{K\sigma^{2}}{\rho}\mathbf{I}_{M}$.
	Using \eqref{eq:dPsik-1} into \eqref{ik1} gives
	\begin{align}
		&d({\color{black}\tilde{I}_{k}}) =\tr(\boldsymbol{\Psi}d\mathbf{R}_{k})\nn\\
		&\!+\!\!\sum\nolimits _{i\in \mathcal{K}_t}\!\!\!\tr(\bar{\mathbf{R}}_{k}\bigl(d(\mathbf{R}_{i})\mathbf{Q}_{i}\mathbf{R}_{i}-\mathbf{R}_{i}\mathbf{Q}_{i}d(\mathbf{R}_{i})\mathbf{Q}_{i}\mathbf{R}_{i}+\mathbf{R}_{i}\mathbf{Q}_{i}d(\mathbf{R}_{i})\!\bigr)\!)\nn\\
		&-2\tr\bigl(\boldsymbol{\Psi}_{k}\bigl(d(\mathbf{R}_{k})\mathbf{Q}_{k}\mathbf{R}_{k}-\mathbf{R}_{k}\mathbf{Q}_{k}d(\mathbf{R}_{k})\mathbf{Q}_{k}\mathbf{R}_{k}+\mathbf{R}_{k}\mathbf{Q}_{k}d(\mathbf{R}_{k})\!\bigr)\!\bigr)\nn\\
		& =\tr(\check{\boldsymbol{\Psi}}_{k}d(\mathbf{R}_{k}))+\sum\nolimits _{i\in \mathcal{K}_t}\tr\bigl(\tilde{\mathbf{R}}_{ki}d(\mathbf{R}_{i})\bigr),
	\end{align}
	where 
	\begin{equation}
		\check{\boldsymbol{\Psi}}_{k}=\boldsymbol{\Psi}-2\bigl(\mathbf{Q}_{k}\mathbf{R}_{k}\boldsymbol{\Psi}_{k}+\boldsymbol{\Psi}_{k}\mathbf{R}_{k}\mathbf{Q}_{k}-\mathbf{Q}_{k}\mathbf{R}_{k}\boldsymbol{\Psi}_{k}\mathbf{R}_{k}\mathbf{Q}_{k}\bigr)
	\end{equation}
	and 
	\begin{equation}
		\tilde{\mathbf{R}}_{ki}=\mathbf{Q}_{i}\mathbf{R}_{i}\bar{\mathbf{R}}_{k}-\mathbf{Q}_{i}\mathbf{R}_{i}\bar{\mathbf{R}}_{k}\mathbf{R}_{i}\mathbf{Q}_{i}+\bar{\mathbf{R}}_{k}\mathbf{R}_{i}\mathbf{Q}_{i},i \in \mathcal{K}_t.
	\end{equation}
	Again, we note that $d(\bR_{k})=0$ if $w_{k}\neq t$. Thus, 	by using (\ref{eq:dRk}),  we can write $\nabla_{\thetv^{t}}{\color{black}\tilde{I}_{k}}$ as
	\begin{align}
		\nabla_{\thetv^{t}}{\color{black}\tilde{I}_{k}}=\frac{\partial}{\partial\boldsymbol{\theta}^{t\ast}}{\color{black}\tilde{I}_{k}} & =\diag\bigl(\tilde{\mathbf{A}}_{kt}\diag(\boldsymbol{\mathbf{\beta}}^{t})\bigr),
	\end{align}
	where $\bar{\nu}_{k}=\hat{\beta}_{k}\tr\bigl(\check{\boldsymbol{\Psi}}_{k}\mathbf{R}_{\mathrm{BS}}\bigr)$,
	$\tilde{\nu}_{ki}=\hat{\beta}_{k}\tr\bigl(\tilde{\mathbf{R}}_{ki}\mathbf{R}_{\mathrm{BS}}\bigr)$,
	and 
	\begin{equation}
		\tilde{\mathbf{A}}_{kt}=\begin{cases}
			\bar{\nu}_{k}\mathbf{A}_{t}+\sum\nolimits _{i\in\mathcal{K}_{t}}^{K}\tilde{\nu}_{ki}\mathbf{A}_{t} & w_{k}=t\\
			\sum\nolimits _{i\in\mathcal{K}_{t}}\tilde{\nu}_{ki}\mathbf{A}_{t} & w_{k}\neq t,
		\end{cases}
	\end{equation}
	which is in fact the special case of \eqref{A_tilde_general} when $u=t$, meaning that \eqref{derivtheta_t_Ik} has been proved. Following the same steps we can prove \eqref{derivtheta_r_Ik}, but again, we  skip the details for the sake of brevity.
	
	The expression for $\nabla_{\boldsymbol{\beta}^{t}}S_{k}$ is derived
	as follows. First we only need to consider $\nabla_{\boldsymbol{\beta}^{t}}S_{k}$
	when $w_{k}=t$. For this case, from \eqref{dRk:general}, we can write 
	\begin{subequations}\label{dRk_beta_t}
		\begin{align}
			d(\mathbf{R}_{k}) & =\hat{\beta}_{k}\mathbf{R}_{\mathrm{BS}}\tr\bigl(\mathbf{A}_{t}\herm d(\bPhi_{t})+\mathbf{A}_{t}d\bigl(\bPhi_{t}\herm\bigr)\bigr)\\
			& =\hat{\beta}_{k}\mathbf{R}_{\mathrm{BS}}\bigl(\diag\bigl(\mathbf{A}_{t}\herm\diag(\btheta^{t})\bigr)^{\T}d(\boldsymbol{\beta}^{t})\nn\\
			&+\diag\bigl(\mathbf{A}_{t}\diag(\btheta^{t\ast})\bigr)^{\T}d(\boldsymbol{\beta}^{t})\bigr)\\
			& =2\hat{\beta}_{k}\mathbf{R}_{\mathrm{BS}}\Re\bigl\{\diag\bigl(\mathbf{A}_{t}\herm\diag(\btheta^{t})\bigr\}^{\T} d(\boldsymbol{\beta}^{t}).
		\end{align}
	\end{subequations}
	Now, using \eqref{dRk_beta_t} in \eqref{eq:dSk:final} yields
	\begin{equation}
		\nabla_{\boldsymbol{\beta}^{t}}S_{k} = 2 \nu_k\Re\bigl\{\diag\bigl(\mathbf{A}_{t}\herm\diag(\btheta^{t})\bigr)\bigr\}.
	\end{equation}
	Similarly, we can write  $\nabla_{\boldsymbol{\beta}^{r}}S_{k}$ as 
	\begin{equation}
		\nabla_{\boldsymbol{\beta}^{r}}S_{k} = 2 \nu_k\Re\bigl\{\diag\bigl(\mathbf{A}_{r}\herm\diag(\btheta^{r})\bigr)\bigr\}.
	\end{equation}
	For $\nabla_{\boldsymbol{\beta}^{t}}{\color{black}\tilde{I}_{k}}$ and $\nabla_{\boldsymbol{\beta}^{r}}{\color{black}\tilde{I}_{k}}$, we can follow the same steps above, which gives
	\begin{align}
		\nabla_{\boldsymbol{\beta}^{t}}{\color{black}\tilde{I}_{k}}&=2\Re\bigl\{\diag\bigl(\tilde{\mathbf{A}}_{kt}\herm\diag(\boldsymbol{\btheta}^{t})\bigr)\bigr\}.\\
		\nabla_{\boldsymbol{\beta}^{r}}{\color{black}\tilde{I}_{k}}&=2\Re\bigl\{\diag\bigl(\tilde{\mathbf{A}}_{kr}\herm\diag(\boldsymbol{\btheta}^{r})\bigr)\bigr\}.
	\end{align}

\end{appendices}
\bibliographystyle{IEEEtran}

\bibliography{IEEEabrv,bibl1}

\end{document}